\newtheorem{theorem}{Theorem}
\newtheorem{lemma}{Lemma}
\newcommand{\Zset}{\mathbb{Z}}
\newcommand{\Rset}{\mathbb{R}}
\newcommand{\favg}{w_{\rm avg}}
\newcommand{\supp}{{\rm supp}}
\newcommand{\Hsubmod}{H_{\rm plr}}
\newcommand{\Hcds}{H_{\rm plt}}
\newcommand{\fsubmod}{f_{\rm plr}}
\newcommand{\fcds}{f_{\rm plt}}
\newcommand{\argmin}{\mathrm{argmin}}
\newcommand{\argmax}{\mathrm{argmax}}
\pgfplotsset{compat=1.12}
\begin{document}

\title{Adaptive Algorithm for Finding Connected Dominating Sets\\ in
Uncertain Graphs\footnote{This is the accepted version of a paper
to be published by IEEE/ACM
Transactions on Networking.}}

\author{Takuro Fukunaga\thanks{The author is with Chuo University, Japan, 
 JST, PRESTO, Japan, and RIKEN Center for Adcanced Intelligence Project,
 Japan. Email: fukunaga.07s@g.chuo-u.ac.jp. This study was supported by JST PRESTO Grant Number JPMJPR1759.}}

\date{}

\maketitle

 \begin{abstract}
  The problem of finding a minimum-weight connected dominating set (CDS)
  of a given undirected graph has been studied actively, 
  motivated by operations of wireless ad hoc networks.
  In this paper, we formulate a new stochastic variant of the problem.
  In this problem, each node in the graph has a hidden random state,
  which represents whether the node is active or inactive,
  and we seek a CDS of the graph that consists of
  the active nodes.
  We consider an adaptive algorithm
  for this problem,
  which repeat choosing nodes and observing the states of the nodes
  around the chosen nodes until a CDS is found.
  Our algorithms have a theoretical performance guarantee
  that the sum of the weights of the nodes chosen by the algorithm
  is at most $O(\alpha \log (1/\delta))$ times that of any adaptive
  algorithm in expectation, where $\alpha$ is an approximation factor for the
  node-weighted polymatroid Steiner
  tree problem and $\delta$ is the minimum 
  probability of possible scenarios on the node states.
 \end{abstract}

\section{Introduction}
\label{sec.intro}

\paragraph*{Background}
For an undirected graph $G$ with the node set $V$,
a subset $U$ of $V$ is called a \emph{dominating set}
if, for each $v \in V$, $v$ itself or a neighbor of $v$ is included in $U$.
If a dominating set induces a connected subgraph, 
then it is called a \emph{connected dominating set} (CDS).
The CDS problem
is the problem of finding a minimum-cardinality/weight CDS in a given
graph, where the nodes in the graph are weighted in the weight
minimization case.
This problem is being actively studied, 
one motivation being 
operations of wireless ad hoc networks.
For avoiding
flooding of messages in a wireless ad hoc network,
a popular approach is to construct a virtual backbone.
Requirements on the virtual backbone are the connectivity
and 
accessibility from outside the virtual backbone.
Since these requirements are naturally formulated as constraints on CDSs,
it is widely accepted to use a CDS
as the virtual backbone
of a wireless ad hoc network.

In wireless ad hoc networks, nodes are mobile wireless devices, which 
often join or leave the network.
This instability is inevitable because wireless ad hoc networks
are expected to be used in an
environment
for which traditional networks are
hard to use, such as disaster areas.
To cope with the instability, it is desirable that an algorithm
for computing a virtual backbone 
have robustness against node absence.

One approach for constructing a robust virtual backbone 
is to strengthen the requirements on CDSs. In this context, the notion
of a $k$-connected $m$-dominating set ($(k,m)$-CDS) was proposed
by Dai and Wu~\cite{Dai:2006}.
A node set $U$ is called a $(k,m)$-CDS
if it induces a $k$-connected subgraph and each node in $V\setminus U$
has $m$ neighbors in $U$.
There are many previous studies on algorithms for finding a
minimum-cardinality/weight $(k,m)$-CDS (e.g., \cite{Shang:2007jg,ShiZMD17,Wang:2015,Zhang16}).
A disadvantage of this approach is that
the $(k,m)$-CDS tends to be large if $k$
and $m$ are large.
To make a $(k,m)$-CDS tolerant
against clustered node failures (e.g., caused by
an earthquake),
we have to set $k$ and $m$ extremely large.
However, this is infeasible in many cases due to the size disadvantage
of $(k,m)$-CDSs.

\paragraph*{Our contributions}
Another approach 
for the above-stated issue
is to reconstruct CDSs periodically. However, this requires computing
CDSs without knowing which nodes remain in the network.
To cope with this technical challenge, we consider the adaptive
optimization approach, which is recently regarded as a strong paradigm for presenting
robust algorithms against uncertainty (see Section~\ref{subsec.related-adaptive}).
Specifically, we 
formulate a robust optimization version of 
the CDS problem,
that we call the \emph{robust CDS problem}. In this problem, each node in the given
graph has a hidden state, which represents whether the node is active or inactive.
The problem seeks a CDS in the
graph that consists of the active nodes.
For this purpose, we consider adaptive policies for finding a CDS.
An adaptive policy chooses nodes sequentially.
Immediately after a node is chosen,
the policy receives feedback about the states of nodes around the chosen node.
Depending on this feedback, the policy decides which nodes to choose in the
subsequent iterations.
The objective is to minimize the weights of nodes chosen until the
solution set becomes a CDS.

We emphasize that an adaptive policy for the robust CDS problem does not
know which nodes are active when it is invoked.
The only information available to the policy is the probability
distribution on the states of the nodes.
During the process, it
observes them partially through feedback.
We consider two feedback models---the \emph{full feedback model} and the
\emph{local feedback model}.
In the full feedback model,
the states of the neighbors of the chosen node are revealed,
whereas only the state of the chosen node is revealed in the local
feedback model.

We present an algorithm that computes adaptive policies
for the robust CDS problem with both of the feedback models.
We provide a theoretical performance guarantee for this algorithm.
This performance guarantee compares the expected weights of the nodes
chosen by a policy
with those chosen by an optimal adaptive policy.
We prove that their ratio 
is at most $O(\alpha \log (1/\delta))$,
where $\alpha$ is the approximation factor of an algorithm for the
node-weighted polymatroid Steiner tree problem
(which will be defined in Section~\ref{sec.formulation})
and $\delta$ is the minimum 
probability of possible scenarios
on the node states.
If the given graph is a unit disk graph with $n$ nodes,
then we can assume that $\alpha$ is polylogarithmic on $n$ and the
fractionality of the probabilities on the node states
(see Section~\ref{sec.formulation}).
Note that unit disk graphs are popular as a model of wireless 
networks,
and many previous studies on CDSs assume that the graph is a unit disk graph.
In addition to present the performance guarantee,
we investigate the empirical performance of this algorithm through
simulations.

Summing up, the contributions given in this paper are
(i) formulation of the robust CDS problem,
(ii) an algorithm with a theoretical $O(\alpha \log (1/\delta))$-factor 
performance guarantee,
and (iii) comparison of the algorithms through computational simulations.

\paragraph*{Organization}
The rest of this paper is organized as follows.
Section~\ref{sec.related} surveys the related studies.
Section~\ref{sec.formulation} formulates the robust
CDS problem and introduces other preliminary concepts.
Section~\ref{sec.fullfeedback} presents our $O(\alpha \log (1/\delta))$-approximation algorithm.
Section~\ref{sec.heuristic} defines
two heuristic algorithms, with which we compare our algorithm in the simulations.
Section~\ref{sec.simulation} reports the simulation results.
Section~\ref{sec.conclusion} concludes the paper.

\section{Related studies}
\label{sec.related}

\subsection{CDS problem with full information}

Guha and Khuller~\cite{GuhaK98} 
presented 
two $O(H(\Delta))$-approximation algorithms for cardinality minimization
of the CDS problem,
and a $2.613\ln n$-approximation algorithm for weight minimization,
where
$\Delta$ is the maximum degree of the given graph,
$H(\Delta)$ is the $\Delta$-th harmonic number,
and $n$ is the number of nodes of the graph.
They also showed that
no polynomial-time algorithm
for the cardinality minimization 
achieves
approximation factor $(1-\epsilon)H(\Delta)$
for any fixed $\epsilon \in (0, 1)$ unless
${\rm NP} \subseteq {\rm DTIME}[n^{O(\log \log n)}]$.
The approximation factor for weight minimization
was improved by Guha and Khuller~\cite{GuhaK99} to $(1.35+\epsilon)\ln n$.

Besides general graphs, the CDS problem has been considered
extensively for unit disk graphs.
The problem is NP-hard even when the objective is cardinality
minimization and the graph is a unit disk graph~\cite{ClarkCJ90}.
Simultaneously, 
 a polynomial-time approximation scheme (PTAS)
 was given by Cheng~et~al.~\cite{ChengHLWD03}
 for cardinality minimization in a unit disk graph;
note that the existence of a PTAS means
that for
any fixed constant $\epsilon >0$, there exists a 
($1+\epsilon$)-approximation algorithm that runs in polynomial time.
As for weight minimization,
the current best approximation factor
is
$6.475+\epsilon$,
obtained by combining
 Willson~et~al.\ \cite{WillsonZWD15},
Zou~et~al.\ \cite{ZouLGW09},
and Byrka~et~al.~\cite{ByrkaGRS13}.

\subsection{CDS problem with limited information}
One of the algorithms 
presented by Guha and Khuller~\cite{GuhaK98} 
for the cardinality minimization is so-called a
local greedy algorithm, that repeats selecting one or two nodes greedily
from the 2-hop neighborhood of the nodes chosen in the preceding
iterations.
Other local greedy algorithms are also considered 
by Borgs\ et\ al.~\cite{BorgsBCKL12} and 
Khuller and Yang~\cite{KhullerY16}.
Since these local greedy algorithms do not require information on the whole
graph, they work for our setting with a suitable feedback model.
However, the theoretical performance guarantee of these algorithms holds
only for the cardinality minimization, and it seems difficult to extend
them to the weight minimization.

For the cardinality minimization in  unit disk graphs, distributed algorithms are also studied actively.
The current best approximation ratio of distributed algorithms is 6.91 due to Funke\ et\ al.~\cite{funke2006simple}.
Distributed algorithms do not require global information on the graph, and thus their purpose is similar to our study. However,
the distributed algorithms assume that each node in the graph does some computation and communication, which is different from our study. In addition, 
only simple algorithms can be implemented as distributed algorithms.
Indeed, no distributed algorithm is known to the weight minimization.

Shi~et~al.~\cite{ShiCCL16} considers an adaptive algorithm for finding CDSs in energy-harvest sensor networks. The main purpose of this study is 
to manage energy of sensors, and the problem setting considered there is totally different from ours.

\subsection{Adaptive optimization}
\label{subsec.related-adaptive}
We are aware of no previous study on the stochastic version of the
CDS problem, but there are many studies on adaptive 
algorithms for stochastic optimization problems.
One of the previous studies most closely related to our work in this literature
is that by Lim, Hsu, and Lee~\cite{LimHL17} for the 
Bayesian Canadian traveler problem.
The Bayesian Canadian traveler problem
is a stochastic variant of the shortest path problem.
It seeks adaptive policies to find a path from the source to the sink
in the situation where each edge takes the active or inactive state.
The essential idea in \cite{LimHL17} is to consider
the exploration path and the exploitation path in the graph given
by the most-likely scenario repeatedly.
The exploration path is given by an algorithm for the edge-weighted polymatroid
Steiner tree problem,
and the exploitation path is given by an algorithm for the shortest path
 problem.
The algorithm of \cite{LimHL17} compares the weights of the exploration
and the exploitation paths,
and visits nodes along the path that achieves the smaller weight
until the sink is visited or the probability of the scenarios consistent with the observations decreases by a factor of $1/2$.
Similar ideas can be also found in the studies on a
stochastic submodular covering problem with a path constraint \cite{LimHL15},
and the stochastic traveling salesperson problem \cite{GuptaNR17}.

\section{Problem formulation}
\label{sec.formulation}

\subsection{Notations}
$\Rset_+$ and $\Zset_+$ denote the sets of
nonnegative real numbers and nonnegative integers, respectively.
For a positive integer $i$, let $[i]$ denote $\{1,\ldots,i\}$.
Let $G=(V,E)$ denote
an undirected graph with the node set $V$ and the edge set $E$.
Throughout the paper, we denote $|V|$ by $n$.
An edge joining two nodes $u$ and $v$ is denoted
by $uv$.
We say that an edge $uv$ is \emph{induced} by a node set $U \subseteq V$
if $u,v \in U$,
and a subgraph  is induced by a node set $U$
if its node set is $U$ and its edge set consists of the edges induced by $U$.
We denote the subgraph of $G$ induced by $U$ by $G[U]$
and its edge set by $E[U]$.
We say that a subset $U$ of $V$ \emph{dominates} a node $v$
if $v \in U$ or if there exists a node $u \in U$ such that $uv \in E$.
For a node $v \in V$, $N[v]$ denotes the set of neighbors of $v$ and $v$
itself.
For $U\subseteq V$, $N[U]$ is the set of nodes dominated by $U$.

\subsection{CDS problem}
We introduce a formal definition of the CDS problem.
For the connection with the robust CDS problem introduced below,
we present the formulation in which the root node is
specified. 
This formulation is slightly different from the most well-studied setting of the
CDS problem, in which the root node is not specified.
However, the setting without the root can be reduced
to the setting with the root, and hence the formulation given in this
paper is more general.

Suppose that
we are given a connected undirected graph $G=(V,E)$
and the weights $w \colon V \rightarrow \Rset_+$ of the nodes.
Moreover, a node in $V$ is specified as the root node and denoted by $r$.
A subset $U$ of $V$ is called the \emph{connected dominating set} (CDS)
if $U$ includes the root $r$,
induces a connected subgraph of $G$,
and each node in the graph is dominated by $U$.
The weight of $U$ is defined as $w(U):= \sum_{v \in U}w(v)$.
The objective of the problem is to find a minimum-weight CDS.

\subsection{Robust CDS problem}
\label{sec.robustCDS}

In the robust version of the CDS problem, 
several nodes are inactive (absent from the network).
We seek a CDS in the graph consisting of active nodes
for the case that we do not know which nodes are active.
When the graph of the active nodes is not connected, there are two natural
formulations; one formulation demands outputting the message that the
graph is not connected, and the other demands outputting a CDS of the
connected component including the root. We focus 
on the latter formulation in this paper, but our results can be easily applied
to the former formulation with a slight modification.

 \begin{table}
  \caption{List of notations introduced in Section~\ref{sec.robustCDS}}
  \label{table.notations}
  \centering
 \begin{tabular}{|l|l|}
  \hline
  $\Psi$&  set of full realizations\\
  \hline
  $A_{\psi}$ & $\{v \in V\colon \psi(v)=1\}$ for $\psi \in \Psi$\\
  \hline
  $\Psi^*$ & set of realizations\\
  \hline
  $\supp(\phi)$ & $\{v \in V \colon \phi(v)\ne *\}$ for $\phi\in \Psi^*$\\
  \hline
  $U(\pi,\psi)$ & set of nodes chosen by a policy $\pi$ for $\psi \in \Psi$\\
  \hline
  $w(\pi, \psi)$ & $w(U(\pi,\psi))$ for a policy $\pi$ and $\psi \in \Psi$\\ 
  \hline
  $\favg(\pi)$ & expectation of $w(\pi,\psi)$ for a policy $\pi$ and
      $\psi \in \Psi$\\
  \hline
  $\delta$ & $\min\{p(\psi) \colon \psi \in \Psi, p(\psi)>0\}$\\
  \hline
  $M$ & minimum number s.t. $Mp(\psi) \in \Zset_+$ for $\forall \psi
      \in \Psi$\\
  \hline
 \end{tabular}
 \end{table}

Let us formulate the problem more precisely.
The notations introduced in this section is listed in Table~\ref{table.notations}.
We let a vector $\psi \in \{0,1\}^V$
represent which nodes are active or not;
$\psi(v)=1$ (resp., $\psi(v)=0$)
indicates that the node $v$ is active (resp., inactive).
We assume that the root is always active (i.e., $\psi(r)=1$).
We call such a vector $\psi$ a \emph{full realization},
and let $\Psi$ denote the set of all full realizations.
The states of nodes are decided randomly (possibly correlated),
and $p(\psi)\in [0,1]$ is the probability that the states of all nodes are represented by $\psi
\in \Psi$.

For a full realization $\psi$,
let $A_{\psi}=\{v \in V\colon \psi(v)=1\}$.
The task of the robust CDS problem is to find a CDS of the graph
$G[A_{\psi}]$.
If $G[A_{\psi}]$ is not connected, then we seek a CDS of the component
including the root.

For this problem, we consider adaptive policies
for choosing nodes sequentially.
Immediately after the policy chooses a node,
it receives feedback about the states of nodes around the chosen node.
We consider two feedback models.
\begin{itemize}
 \item \emph{full feedback}: if the chosen node $v$ is active,
       then the states of $v$ and neighbors of $v$ are revealed,
       while only the state of $v$ is revealed when $v$ is inactive.
 \item \emph{local feedback}: if a node $v$ is chosen,
       then only the state of $v$ is revealed.
\end{itemize}
If the chosen node is active, then it must be added to the solution set irrevocably, whereas it is discarded otherwise.
We demand that the solution set 
always induce a connected subgraph.
Namely,
if $U$ is the set of nodes chosen by the policy up to a certain iteration,
then the policy has to choose a node 
dominated by $U \cap A_{\psi}$
in this iteration.
The policy repeats this process until it can be determined that
the solution set is a CDS
of the component including the root in $G[A_{\psi}]$.
We assume that the root node has already been included in the
solution set when the first iteration of the policy begins.

We represent the observations made during the algorithm
by a vector $\phi \in \{0,1,*\}^{V}$;
$\phi(v)=1$ (resp., $\phi(v)=0$)
indicates that the state of a node $v$ is observed
and found to be
active (inactive),
and $\phi(v)=*$ indicates that the state of $v$ has not been yet observed by the
policy.
It is assumed that $\phi(r)=1$.
Moreover, in the full feedback model,
$\phi(v)=1$ for each neighbor $v$ of $r$
without loss of generality
because the full feedback model revealed the states of all neighbors of
$r$
and the inactive neighbors of $r$ can be safely removed from the graph.
We call such a vector $\phi$ a \emph{realization},
and let $\Psi^*$ denote the set of all realizations.
We let $\supp(\phi)=\{v \in V\colon \phi(v)\neq *\}$.
We say that a realization $\psi$ \emph{extends} another realization
$\phi$
and write $\psi \succeq \phi$
if $\psi(v)=\phi(v)$ for all $v \in \supp(\phi)$.
For realizations $\psi,\phi \in \Psi^*$ with $\psi \succeq \phi$,
$p(\psi\mid \phi)$ denotes the probability that
the state of each node $v \in \supp(\psi)\setminus \supp(\phi)$
is represented as $\psi(v)$
conditioned on that the states of nodes in $\supp(\phi)$
are represented as $\phi$.

Let $\pi$ be a policy,
and let $U(\pi,\psi)$ be the set of nodes (possibly including inactive nodes)
chosen by $\pi$
when the states of all nodes are represented as $\psi\in \Psi$.
We note that if the observation kept by $\pi$
is $\phi$ when $\pi$ terminates and the full realization is $\psi$,
then $U(\pi,\psi) \subseteq \supp(\phi)$ holds and $U(\pi,\psi)\cap A_{\psi}$ is a
dominating set of $G[\supp(\phi)]$.
Moreover, the CDS output by $\pi$ for a full realization $\psi$
is $U(\pi,\psi) \cap A_{\psi}$.
Particularly
$U(\pi,\psi)= \supp(\phi)$ holds in the local feedback model.
In the full feedback model, all nodes in $U(\pi,\psi)$
are active because a policy knows the state of a node when it is
chosen and it is unnecessary to choose inactive nodes.

We denote $w(U(\pi,\psi))$ by $w(\pi,\psi)$ for conciseness.
Let $\favg(\pi)$ denote the expectation of $w(\pi,\psi)$
(i.e., $\favg(\pi)=\sum_{\psi \in \Psi}w(\pi,\psi)p(\psi)$).
We evaluate the performance of a policy $\pi$ by 
$\favg(\pi)$.
We say that a policy $\pi$ achieves the approximation factor $\alpha \geq 1$
if $\favg(\pi) \leq \alpha \favg(\pi')$ for any policies $\pi'$.

As a parameter, we let
$\delta=\min\{p(\psi) \colon \psi \in \Psi, p(\psi)>0\}$ and 
$M$ denote
the minimum number such that
$M p(\psi)$ is an integer for any $\psi \in \Psi$.
We present an adaptive policy that 
achieves an approximation factor depending on $\delta$ (and the
approximation factor for the node-weighted polymatroid Steiner tree
problem given below).
As $\delta$ becomes smaller, the factor becomes larger.

We would like to emphasize that a policy does not know which nodes
are active in advance.
The information available to the policy when it is invoked is only
the probability distribution $p\colon \Psi \rightarrow [0,1]$ on the full
realizations.
During the process, the policy decides its behavior
from the probability that each node is active and the one that each node is
connected to the root in the graph on the active nodes, conditioned on
the observations obtained up to that point.
If the behavior is independent from the observations,
then the policy is called \emph{nonadaptive}.

\subsection{Polymatroid Steiner tree problem}
\label{sec.polymatroid}
A set-function $f\colon 2^V \rightarrow \Zset_+$ on a finite set $V$
is called \emph{polymatroid}
if it is monotone (i.e., $f(X) \leq f(Y)$ for any $X \subseteq Y
\subseteq V$), submodular (i.e., $f(X \cup \{v\})- f(X) \geq f(Y \cup
\{v\})-f(Y)$ for any $X \subseteq Y \subseteq V$ and $v \in V\setminus
Y$),
and is proper (i.e., $f(\phi)=0$).

In the polymatroid Steiner tree problem, we are given an undirected
graph $G=(V,E)$, a root node $r\in V$,
and a polymatroid $f\colon 2^V \rightarrow \Rset_+$.
A feasible solution is a tree in $G$
such that the root node $r$ is spanned by it,
and the node set $U$ spanned by the tree satisfies $f(U)=f(V)$.
The objective of the problem is to find such a tree of minimum weight.

In the literature, it is usual to assume that
each edge is associated with a nonnegative weight and
the weight of a tree is
defined as the sum of the weights of edges in the tree.
C{\u{a}}linescu and Zelikovsky~\cite{CalinescuZ05}
gave an $O(\frac{1}{\epsilon \log \log n} \log^{2+\epsilon} n \log
f(V))$-approximation algorithm for this case,
where $\epsilon > 0$ is any constant.
In our algorithms for the robust CDS problem, we solve the node-weighted
polymatroid Steiner tree problem, where each node is
associated with a nonnegative weight, and the weight of a tree is defined
as the sum of the weights of the nodes spanned by the tree.
We let $\alpha$ denote the approximation factor of an algorithm for
solving the node-weighted polymatroid Steiner tree problem.
For general graphs, we know no algorithm that achieves a nontrivial
approximation factor for the node-weighted problem.
In the following theorem, we show that, 
if the graph is a unit disk graph,
then the node-weighted problem can be reduced to the edge-weighted
problem,
and hence the approximation factor $\alpha$ is identical to the
one achieved by \cite{CalinescuZ05}.
In unit disk graphs, 
nodes are located on the Euclidean plane,
and
any two nodes are joined by an edge
whenever their distance is at most a unit distance.
From this definition,
it is accepted to model wireless networks as unit disk graphs.

 \begin{theorem}
  \label{thm.polymatroidSteiner}
 If the graph is a unit disk graph, then the node-weighted polymatroid
 Steiner tree admits an $O(\frac{1}{\epsilon \log\log
 n}\log^{2+\epsilon} n \log f(V))$-approximation algorithm
 for any constant $\epsilon > 0$.
 \end{theorem}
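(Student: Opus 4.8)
The plan is to reduce the node-weighted polymatroid Steiner tree problem on a unit disk graph to the edge-weighted version, so that the approximation algorithm of C\u{a}linescu and Zelikovsky~\cite{CalinescuZ05} can be invoked as a black box. The key structural fact about unit disk graphs that I would exploit is that the neighborhood of any node can be covered by a constant number of cliques; equivalently, a unit disk graph has bounded \emph{local independence number} (the size of any independent set contained in a single closed neighborhood is at most a constant, roughly $5$). This bounded local structure is what lets node weights be charged to edge weights with only constant loss.

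Concretely, first I would construct an auxiliary graph $G'$ in which the node weight $w(v)$ is ``pushed'' onto the edges incident to $v$. The natural way is to split each node $v$ into two copies $v_{\rm in}$ and $v_{\rm out}$ joined by an edge of weight $w(v)$, route all original edges through this internal edge, and then relate any polymatroid-feasible Steiner tree in $G'$ to one in $G$; the polymatroid $f$ is lifted to $G'$ by declaring $f'(U') = f(\{v : v_{\rm in} \in U' \text{ or } v_{\rm out} \in U'\})$, which remains monotone, submodular, and proper. A tree spanning a node set in $G'$ of edge-weight $W$ corresponds to a tree in $G$ whose spanned node set has node-weight at most $W$ (each internal edge is used at most once per node). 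Conversely, a node-weighted tree in $G$ of weight $W$ yields an edge-weighted tree in $G'$ of weight at most $W$ plus the weight contributed by the non-internal edges — and here is where unit-disk structure enters: I would set the weight of each original edge $uv$ in $G'$ to $0$, so that the edge-weighted optimum in $G'$ exactly equals (up to the split) the node-weighted optimum in $G$. Then the $O(\frac{1}{\epsilon\log\log n}\log^{2+\epsilon} n\,\log f(V))$-approximate edge-weighted solution in $G'$ translates back to a node-weighted solution in $G$ with the same guarantee, since $|V(G')| = 2n$ only changes the $\log$ and $\log\log$ factors by constants.

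The subtle point — and the step I expect to be the main obstacle — is verifying that the zero-weight original edges do not let the edge-weighted algorithm ``cheat'' by spanning many extra nodes for free: we need the correspondence between spanned node sets to be tight enough that $f'$-feasibility in $G'$ forces $f$-feasibility in $G$ on a node set whose \emph{node} weight is controlled. This is where I would use the clique-cover property of unit disk graphs: any connected subgraph that is ``cheap'' in the split graph and spans node set $U$ can be pruned so that its node-weight is at most a constant times $\sum_{v \in U} w(v)$, because within each clique only a bounded number of representatives are needed to preserve connectivity and domination of the polymatroid constraint. I would make this precise by a local pruning argument on each clique in a fixed constant-size clique cover of each neighborhood. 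Once the reduction is established, the theorem follows immediately by applying \cite{CalinescuZ05} to $G'$ and reading $n' = 2n$ into its bound, which changes only the constant hidden in the $O(\cdot)$.
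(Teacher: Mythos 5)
Your reduction breaks at exactly the point you flag as ``the main obstacle,'' and the clique-cover pruning you propose cannot repair it. Once every original edge of $G'$ has weight $0$, the edge-weighted instance admits feasible solutions of weight $0$ (or far below the node-weighted optimum): a tree that connects the relevant copies using only zero-weight original edges and never traverses an internal edge $v_{\rm in}v_{\rm out}$ is still $f'$-feasible under your lifting, because touching either copy of $v$ counts as covering $v$, and in an \emph{undirected} graph you cannot force a path through $v$ to use the internal edge (with edges of the form $u_{\rm out}v_{\rm in}$ and $v_{\rm out}u_{\rm in}$, the path $u_{\rm out}\hbox{--}v_{\rm in}\hbox{--}x_{\rm out}$ bypasses $v_{\rm in}v_{\rm out}$ entirely; this is why node-splitting gadgets are confined to directed graphs). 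Consequently the guarantee imported from \cite{CalinescuZ05} bounds only the edge weight of the returned tree against an edge-weighted optimum that is decoupled from the node-weighted optimum of $G$: the returned tree may have edge weight $0$ while the node set it spans has weight as large as $w(V)$. No local, per-clique pruning can recover a bound, because the weights of whatever representatives you keep were never charged to anything the edge-weighted algorithm paid for.

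The paper's proof uses the same unit-disk structural fact you cite, but in the opposite direction and without any gadget. Keep $G$ as is and set $w'(uv)=w(u)+w(v)$ for every edge. Then $w(T)\le w'(T)$ holds automatically for any tree $T$ (every spanned node is an endpoint of at least one tree edge), so the node weight of the returned tree is controlled by its edge weight. For the converse direction, the unit-disk property (equivalently, local independence number at most $5$) guarantees that the node set of an optimal node-weighted tree $T^*$ is spanned by some tree $T'$ of maximum degree $5$, whence $w'(T')\le 5\,w(T^*)$. Chaining $w(T)\le w'(T)\le \beta\,w'(T')\le 5\beta\,w(T^*)$ gives the claimed factor. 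Your instinct that the bounded local structure of unit disk graphs is the crucial ingredient is right; it is needed to bound the edge weight of a near-optimal tree by the node weight of the optimum, not to prune the algorithm's output.
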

  \begin{proof}
   Let $w\colon V\rightarrow \Rset_+$
   be the node weights given in the node-weighted problem.
   Let $\beta$ be the approximation factor for the edge-weighted
   polymatroid Steiner tree problem.
   From $w$, we define edge weights $w' \colon E \rightarrow \Rset_+$
   by $w'(uv)=w(u)+w(v)$ for each $uv \in E$.
   Apply the $\beta$-approximation algorithm 
   for the edge-weighted problem with $w'$.
   Then we obtain a tree $T$ feasible for both the node-weighted and the
   edge-weighted problems. We show that $T$ achieves an
   approximation factor  $O(\beta)$ also for the node-weighted problem.

   Let $T^*$ be an optimal solution for the node-weighted problem.
   It is widely known that,
   for any tree $T^*$ in a unit disk graph, there is a tree $T'$ (possibly the same
   as $T^*$) of maximum degree 5 that spans the same node set as $T^*$
   (see e.g., \cite{Shang:2007jg}).
   Let $w(T')$ and $w'(T')$
   denote the node weight and the edge weight of the tree $T'$.
   We use the same notation also for $T^*$ and $T$.
   Since the maximum degree of $T'$ is at most 5,
   $w'(T')\leq 5 w(T')=5 w(T^*)$.
   Because $T$ achieves an approximation factor $\beta$ for the
   edge-weighted problem,
   we have $w'(T)\leq \beta w'(T')$.
   By the definition of $w'$,  $w(T)\leq w'(T)$. Combining these
   relationships shows that
   $w(T) \leq 5\beta  w(T^*)$.
  \end{proof}

As described above, the polymatroid Steiner tree problem is defined from
a polymatroid. In our algorithms discussed below, we sometimes consider the
problem 
with a function $f$ which is not proper but monotone submodular.
In this case, we can construct a polymatroid $f'$ from $f$ by
$f'(X)=f(X)-f(\phi)$
for each $X \subseteq V$. Indeed, the function $f'$
constructed this way is monotone submodular since the this property is
maintained by the subtraction of a constant, and is proper by $f'(\phi)=f(\phi)-f(\phi)=0$.
Moreover, the problem with $f'$ is equivalent to the
one with $f$
since a node set $U$ satisfies
 $f(U)=f(V)$ if and only if $f'(U)=f'(V)$.

  In instances of the polymatroid Steiner tree problem
  solved in our algorithms,
   $f(V)$ is a polynomial on $n$ and $M$.
   Thus, we can assume that  $\alpha$ is polylogarithmic on $n$ and $M$
   when the graph is a unit disk graph.
   Note that $M\le \prod_{\psi \in \Psi:p(\psi)>0}1/p(\psi)$.
   Thus
   $\log M$ is proportional 
   to the bit-size of the inputs.
   The above upper bound on $M$ is at most $(1/\delta)^{1/\delta}$
   because $|\{\psi \in \Psi:p(\psi)>0\}| \le 1/\delta$. However, we
   have better bounds in special cases. For example, when $p(\psi) \in
   \{0,1/\delta\}$ for all $\psi \in \Psi$, $M=1/\delta$.

\section{Algorithms with performance guarantee}
\label{sec.fullfeedback}

In this section, we present our algorithms
to compute an adaptive policy 
that has a performance guarantee for the robust CDS problem.
We first present the algorithm and its analysis
for the full feedback model. We then present those for the local
feedback model.

Since the algorithm and its analysis
for the local feedback model
are almost the same as those for the full feedback model, we
omit a detailed explanation on the local feedback model and
only highlight
their differences.

\subsection{Algorithm for the full feedback model}
\label{subsec.algorithm-full}
The algorithm runs in rounds. We explain the behavior of the
algorithm
in a round.
Suppose that $\phi\in \Psi^*$ represents the observations made and $U
\subseteq \supp(\phi)$  comprises the nodes chosen in the preceding rounds.
We explain which nodes are chosen by the policy in the current round.

In the algorithm, we safely remove
a node in $V\setminus U$ from the graph if it is
inactive for all full realizations of $\Psi_{\phi}:= \{\psi\in
\Psi\colon \psi \succeq \phi, p(\psi)>0\}$.
Moreover, if a node is disconnected from the
root (because of the removal of inactive nodes),
then this node is not included in and is not dominated by any CDS.
Hence, we also remove such nodes from the graph.
Because of these removals, in the rest of this section,
we suppose that each node in $\supp(\phi)$ is active,
each node in $V \setminus \supp(\phi)$ is active for some full
realization of $\Psi_{\phi}$, and the graph $G$ is always connected.

If $U$ dominates all nodes in the graph (i.e., $\supp(\phi)=V$),
then $U$ is already a CDS.
In such a case, the policy
terminates before entering this round.
In the following, we suppose that $V\setminus \supp(\phi)\neq \emptyset$.
 
\paragraph{Most-likely observations}
We define $H$ as the set of nodes in $V\setminus U$ which is active with probability 
more than a half conditioned on that $\phi$ is realized. In other words, $H=\{v
\in V\setminus U\colon
\sum_{\psi\in \Psi_{\phi}:v\in A_{\psi}} p(\psi\mid \phi) > 1/2\}$. 
Note that all nodes in $\supp(\phi)\setminus U$
belong to $H$ because those nodes are removed from the graph if they are
inactive.
Let $R$ be the set of nodes in the connected component
of the graph $G[U \cup H]$ including $r$;
see also Figure~\ref{fig.graph} for
the definition of $R$.

\begin{figure}
\centering
 \includegraphics[]{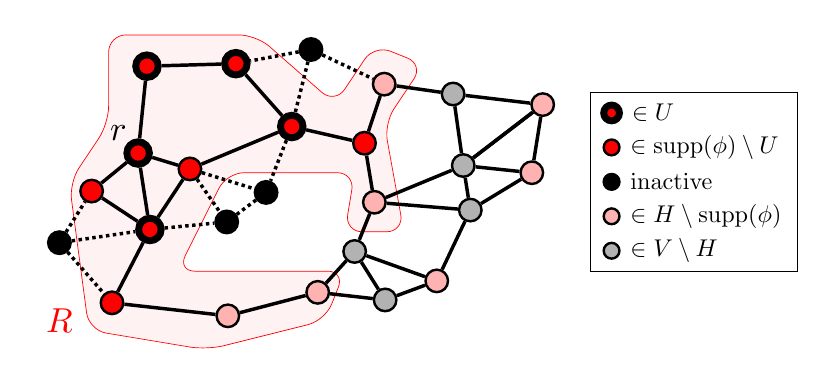}
\caption{An example of the state after a round of the algorithm (this figure also illustrates the inactive nodes removed from the graph)}
 \label{fig.graph}
\end{figure}

For each $v \in R \cap H$,
we define the \emph{most-likely observation vector} $\xi_v \in
\{0,1,*\}^V$ as follows:
\[
 \xi_v(u) =
 \begin{cases}
  * & u \in V \setminus N[v],\\
  1 & u \in H \cap N[v],\\
  0 & u \in N[v]\setminus H\\
 \end{cases}
\]

Our algorithm computes two subsets $\Hcds$ and $\Hsubmod$
of $R \cap H$ from solutions 
to two instances of  the node-weighted polymatroid Steiner tree problem.
In both of these instances, the given graph is $G[R]$,
the root node is $r$,
and the node weights 
$w'\colon R\rightarrow\Rset_+$ are defined by
\[
 w'(v)=
  \begin{cases}
  0 & v \in U,\\
  w(v) & v \in H\setminus U
  \end{cases}
\]
for each $v \in R$.
The polymatroids are different in these instances.
Below, we give
the definitions of these polymatroids
used for computing $\Hcds$ and $\Hsubmod$.

For an intuition,
the polymatroid in the instance for $\Hcds$
is designed so that choosing nodes in $\Hcds$ 
gives a small-weight solution for the robust CDS problem,
assuming that the feedbacks given by choosing nodes in $\Hcds$
are represented as their most-likely observation vectors.
The polymatroid for $\Hsubmod$
is designed so that choosing nodes in $\Hsubmod$ decreases the uncertainty in the node states by at least a half.
Thus, $\Hcds$ aims at the exploitation,
and $\Hsubmod$ aims at the exploration.

\paragraph{Definition of $\Hcds$}
A set-function $\fcds \colon 2^{R} \rightarrow \Rset_+$ used for computing $\Hcds$ is
defined as follows.
Let $H' \subseteq R \cap H$.
The \emph{residual hypothesis}
denoted by $\Psi_{H'}$
is defined as the
set of full realizations $\psi\in \Psi_{\phi}$ such that $\psi\succeq
\xi_v$ for all $v \in H'$.
If two nodes $u$ and $v$ are active and belong to the same connected component
of $G[A_{\psi}]$ in a full realization $\psi \in \Psi$, then we write $u \sim_{\psi} v$.

Now, we define 
$\fcds$ by 
\begin{multline*}
 \fcds(X)
 =|(V\setminus \supp(\phi))\cap N[X]\}|
\\ + \sum_{v \in (V\setminus \supp(\phi)) \setminus N[X]}
\left(1- \sum_{\psi \in \Psi_{X \cap H}: v \sim_{\psi} r}p(\psi)\right)
\end{multline*}
for each $X \subseteq R$.
Thus,
if a node $v \in V\setminus \supp(\phi)$ is dominated by $X$,
then it contributes one unit to $\fcds(X)$.
If $v$ is not dominated by $X$,
then its contribution to $\fcds(X)$
is the probability that $v$ is not active or is not included in the same
component as the root
when most-likely observations
are given as feedback by choosing the nodes in $X\cap H$.
Note that the value of $\fcds(X)$ depends only on $X \cap H$.

We will show below that $\fcds$ is monotone submodular.
\begin{theorem}
\label{thm.submodularity1}
 Function $\fcds$ is monotone submodular.
\end{theorem}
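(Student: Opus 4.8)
The plan is to verify the two defining inequalities of a polymatroid directly from the formula for $\fcds$. Write $\fcds(X) = g_1(X) + g_2(X)$, where $g_1(X) = |(V\setminus \supp(\phi))\cap N[X]|$ counts the not-yet-observed nodes dominated by $X$, and $g_2(X) = \sum_{v \in (V\setminus \supp(\phi))\setminus N[X]} \bigl(1 - \sum_{\psi \in \Psi_{X\cap H}: v\sim_\psi r} p(\psi)\bigr)$ collects the residual probabilities over the remaining nodes. Since $\fcds(X)$ depends only on $X\cap H$, it suffices to check monotonicity and submodularity over subsets of $R\cap H$. I would first handle monotonicity: as $X$ grows, $N[X]$ grows, so each $v \in V\setminus\supp(\phi)$ either stays dominated or newly becomes dominated; a node that moves from the second sum into the first sees its contribution change from $1 - \sum_{\psi\in\Psi_{X\cap H}: v\sim_\psi r} p(\psi) \le 1$ to exactly $1$, a non-decrease. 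For a node that stays in the second sum, its contribution weakly increases because $\Psi_{X'\cap H} \subseteq \Psi_{X\cap H}$ when $X\subseteq X'$ (adding a node only adds the constraint $\psi\succeq\xi_v$), so the subtracted sum $\sum_{\psi\in\Psi_{X\cap H}: v\sim_\psi r} p(\psi)$ can only shrink. Hence $\fcds$ is monotone.

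For submodularity I would argue each piece is submodular and sum. The term $g_1$ is a weighted coverage function (each $v\in V\setminus\supp(\phi)$ "covered" when some node of $X\cap H$ lies in $N[v]$, equivalently when $X\cap H$ meets $N[v]$), and coverage functions are submodular, so $g_1$ poses no difficulty. The delicate part is $g_2$. I would rewrite it as $g_2(X) = \sum_{v\in V\setminus\supp(\phi)} h_v(X)$ where $h_v(X)$ equals $1$ if $v\in N[X]$, and equals $1 - \sum_{\psi\in\Psi_{X\cap H}: v\sim_\psi r} p(\psi)$ otherwise — i.e. fold the $g_1$ contribution of each $v$ into a single per-node function and show each $h_v$ is submodular; then $\fcds = \sum_v h_v$ is submodular as a sum. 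So the real goal is: for a fixed $v\in V\setminus\supp(\phi)$, the function $X\mapsto h_v(X)$ has the diminishing-returns property $h_v(X\cup\{a\}) - h_v(X) \ge h_v(Y\cup\{a\}) - h_v(Y)$ for $X\subseteq Y\subseteq R$ and $a\in R\setminus Y$.

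To prove submodularity of $h_v$, I would case-split on whether $v$ becomes dominated. If $v\in N[Y]$ (so already $v\in N[Y\cup\{a\}]$), the marginal of $a$ at $Y$ is $0$, while the marginal at $X$ is nonnegative by monotonicity of $h_v$, giving the inequality. If $v\notin N[Y\cup\{a\}]$, then $v$ is dominated at neither superset nor subset, and $h_v$ reduces to $1 - q(X)$ with $q(X) := \sum_{\psi\in\Psi_{X\cap H}: v\sim_\psi r} p(\psi) = p\bigl(\{\psi\in\Psi_\phi : \psi\succeq\xi_u \ \forall u\in X\cap H,\ v\sim_\psi r\}\bigr)$; this is the $p$-measure of an intersection of events indexed by elements of $X\cap H$ (one event $E_u = \{\psi : \psi\succeq\xi_u\}$ per node $u$, intersected with the fixed event $\{\psi\succeq\phi, p(\psi)>0, v\sim_\psi r\}$). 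A measure of an intersection of events is a supermodular set function in the index set, so $-q$ is submodular and $h_v = 1 - q$ inherits submodularity. The remaining case $a$ dominates $v$ while $a\in Y$-superset does not is vacuous, and the mixed case ($v\notin N[X\cup\{a\}]$ but $v\in N[Y\cup\{a\}]$, or $v\in N[X\cup\{a\}]\setminus N[X]$) I would handle by noting the marginal at $X$ is $\min(1, q(X)+(\text{coverage gain})) - \ldots$; concretely, comparing a "capped" marginal ($\le 1 - (1-q(X)) = q(X)$ when $a$ covers $v$ at $X$) against the marginal at $Y$ (which is $\le q(Y) \le q(X)$), the capped value dominates. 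The one step I expect to require genuine care is precisely this bookkeeping where $a$'s marginal includes both a coverage jump and a residual-probability drop at the smaller set $X$; I would make it clean by the "per-$v$" reformulation above so that $h_v$ is literally $\min$ of $1$ and $1-q$-plus-indicator, and verify submodularity of $h_v$ as a function of the finite ground set $R\cap H$ by the standard pairwise (two-element) criterion.
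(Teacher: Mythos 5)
Your proof is correct and follows essentially the same route as the paper's: both decompose $\fcds$ into per-node contributions, case-split on whether $v$ is dominated, and reduce the residual-probability term to the fact that $\Psi_{X\cap H}$ is non-increasing in $X$ (your ``measure of an intersection of events is supermodular'' observation is exactly this fact in different clothing). The only rough spot is the phrasing of the mixed case, where the argument is cleaner than you make it sound: when $a$ dominates $v$ but $Y$ does not, the marginals at $X$ and $Y$ are exactly $q(X)$ and $q(Y)$, and $q(X)\ge q(Y)$ follows directly from $\Psi_{Y\cap H}\subseteq \Psi_{X\cap H}$.
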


\begin{proof}
 Let $X \subseteq Y \subseteq R \cap H$.
 If a node is dominated by $X$,
 then it is also dominated by $Y$.
 Moreover, if 
 a full realization $\psi \in \Psi$
 satisfies $\psi \in \Psi_{Y}$,
 then also does $\psi \in \Psi_{X}$.
 These imply that $\fcds(X)\le  \fcds(Y)$,
 which shows that $\fcds$ is monotone.

 To see the submodularity of $\fcds$,
 let $u \in (R \cap H)\setminus Y$ and $v \in V \setminus \supp(\phi)$.
 $v$ contributes to $\fcds(X\cup \{u\})-\fcds(X)$
 by $\sum_{\psi \in  \Psi_{X} \setminus \Psi_{X\cup \{u\}}:v \sim_{\psi} r}p(\psi)$
 if $v$ is not dominated by $X \cup \{u\}$,
 and 
 by $\sum_{\psi \in \Psi_{X}:v \sim_{\psi} r}p(\psi)$
 if $v$ is not dominated by $X$ but is dominated by $u$.
 Otherwise, $v$ does not contribute to $\fcds(X\cup \{u\})-\fcds(X)$.
 The value of this contribution is monotone non-increasing on $X$
 because $\Psi_{X}$ is monotone non-increasing on $X$.
 Thus,
 $\fcds(Y\cup \{u\})-\fcds(Y) \leq \fcds(X\cup \{u\})-\fcds(X)$
 follows for any $X \subseteq Y$. Therefore, $\fcds$ is submodular.
\end{proof}

Function $\fcds$ is not a polymatroid
since
it is not proper
(i.e., $\fcds(\emptyset)\neq 0$) and $\fcds$ may not return an integer.
$\fcds$ can be transformed into a proper function by shifting the function values as
mentioned in Section~\ref{sec.formulation}.
As for the integrality of the function values, it suffices to multiply $\fcds$ by $M$.
Let $\fcds'$ be the polymatroid obtained by applying these two operations
to $\fcds$.

In our algorithm, we solve the instance of the polymatroid Steiner
tree problem with $G[R]$, $w'$, and $\fcds'$,
and define $\Hcds$ as the set of nodes in $H$
spanned by the computed tree.
Note that $\fcds(R)  = |V\setminus \supp(\phi)|$ holds.
Hence, $\fcds(\Hcds \cup U)= |V\setminus \supp(\phi)|$ also holds.
Moreover, $\fcds'(R) \leq nM$, and hence the approximation factor
$\alpha$ for solving this instance is polylogarithmic on $n$ and $M$.

\paragraph{Definition of $\Hsubmod$}
We define a set function $\fsubmod \colon 2^{R} \rightarrow
\Rset_+$ by
\[
\fsubmod(X)
=\min\left\{\frac{1}{2},1-\sum_{\psi \in \Psi_{X\cap H}}p(\psi\mid \phi)\right\}
\]
for each $X \subseteq R$.
Again, the value of $\fsubmod(X)$ depends only on $X \cap H$.
The second term in the minimum of this definition
represents the probability that the states of the nodes
are inconsistent with the most-likely vectors
of nodes in $X$, conditioned on that the full realization is consistent
with $\phi$.

\begin{theorem}
\label{thm.submodular2}
 Function $\fsubmod$ is monotone, submodular, and proper.
\end{theorem}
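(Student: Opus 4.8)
The plan is to recognize the inner expression as a weighted coverage function and then invoke the fact that truncating a monotone submodular function at a constant preserves both properties. Write $g(X) = 1 - \sum_{\psi \in \Psi_{X\cap H}} p(\psi\mid\phi)$, so that $\fsubmod = \min\{1/2, g\}$, and recall that (as already noted) both $g$ and $\fsubmod$ depend only on $X \cap H$, so it suffices to work on subsets of $R \cap H$. For each $\psi \in \Psi_\phi$ I would set $C_\psi = \{v \in R \cap H : \psi \not\succeq \xi_v\}$, the set of nodes whose most-likely observation vector is inconsistent with $\psi$. By the definition of the residual hypothesis, $\psi \in \Psi_{X\cap H}$ holds exactly when $(X \cap H) \cap C_\psi = \emptyset$, and therefore
\[
 g(X) = \sum_{\psi \in \Psi_\phi \,:\, (X\cap H)\cap C_\psi \neq \emptyset} p(\psi\mid\phi),
\]
i.e., $g(X)$ is the total conditional probability mass of the realizations ``covered'' by some node of $X \cap H$.

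Properness then follows immediately: since $\emptyset \cap C_\psi = \emptyset$ for every $\psi$, we get $g(\emptyset) = 0$ and hence $\fsubmod(\emptyset) = \min\{1/2, 0\} = 0$. (Equivalently, $\Psi_\emptyset = \Psi_\phi$ and $\sum_{\psi \in \Psi_\phi} p(\psi\mid\phi) = 1$.) For monotonicity and submodularity of $g$, I would observe that for each fixed $\psi$ the indicator $X \mapsto \mathbf{1}[(X\cap H)\cap C_\psi \neq \emptyset]$ is a single-set coverage function, which is monotone and submodular, and that a nonnegative weighted sum of monotone submodular functions — here with weights $p(\psi\mid\phi) \geq 0$ — is again monotone and submodular. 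Hence $g$ is monotone submodular.

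It then remains to handle the truncation $\fsubmod = \min\{1/2, g\}$. Monotonicity is clear since $g$ is monotone. For submodularity, take $X \subseteq Y$ and $u \notin Y$, and do a short case analysis on where the constant $1/2$ sits relative to $g(Y)$ and $g(Y \cup \{u\})$: if $g(Y) \geq 1/2$ then $\fsubmod(Y\cup\{u\}) - \fsubmod(Y) = 0$, which is at most the (nonnegative) left-hand difference; if $g(Y\cup\{u\}) \leq 1/2$ then all four relevant values of $\fsubmod$ coincide with those of $g$ and submodularity of $g$ applies directly; and in the remaining case $g(Y) < 1/2 < g(Y\cup\{u\})$ one combines $g(X) \leq g(Y)$ with $g(X\cup\{u\}) - g(X) \geq g(Y\cup\{u\}) - g(Y)$ to get $\fsubmod(X\cup\{u\}) - \fsubmod(X) \geq 1/2 - g(Y) = \fsubmod(Y\cup\{u\}) - \fsubmod(Y)$.

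I expect the truncation step to be the only part requiring genuine care; the coverage reformulation and the properness/monotonicity statements are essentially bookkeeping. As an alternative to the coverage argument, one could mirror the element-by-element contribution analysis used in the proof of Theorem~\ref{thm.submodularity1} (tracking, for each realization $\psi$, whether adding $u$ moves $\psi$ out of the residual hypothesis), but the coverage-plus-truncation route is shorter and makes the role of the $\min$ transparent.
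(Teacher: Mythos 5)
Your proof is correct, and it reorganizes the argument in a way that differs from the paper's in two respects. For the inner function, the paper computes the marginal increment directly, $\fsubmod^{\star}(X\cup\{v\})-\fsubmod^{\star}(X)=\sum_{\psi\in\Psi_X\setminus\Psi_{X\cup\{v\}}}p(\psi\mid\phi)$, and derives decreasing marginals from the containment $\Psi_X\setminus\Psi_{X\cup\{v\}}\supseteq\Psi_Y\setminus\Psi_{Y\cup\{v\}}$; your rewriting of $g$ as a weighted coverage function over the sets $C_\psi$ is the same combinatorial fact packaged so that monotonicity and submodularity follow from standard closure properties rather than an ad hoc computation --- a matter of taste, and both are fine. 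The more substantive difference is the truncation step: the paper simply asserts that ``any function that returns the minimum of a constant and the value of a submodular function is submodular,'' which as literally stated is false (the minimum of a constant and a submodular function need not be submodular without monotonicity --- the correct statement requires $g^{\star}$ to be monotone, which it is here). Your explicit three-case analysis, which visibly uses $g(X)\le g(Y)$ in the crossing case $g(Y)<1/2<g(Y\cup\{u\})$, supplies exactly the hypothesis the paper's one-line claim leaves implicit, so your write-up is in fact more careful than the published proof on this point. One tiny bookkeeping remark: in your first case you should note that the left-hand difference is nonnegative because $\fsubmod$ is monotone (which you have already established), but you do say this parenthetically, so nothing is missing.
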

\begin{proof}
 The monotonicity of $\fsubmod$ follows from the fact that 
 $\Psi_{X \cap H}$ is monotone non-increasing on $X$. Moreover, $\fsubmod(\emptyset)=0$
 because
 $\sum_{\psi \in \Psi_{X\cap H}}p(\psi\mid \phi)
 =\sum_{\psi \in \Psi_{\phi}}p(\psi\mid \phi)=1$ holds if $X =\emptyset$.

 To see the submodularity of $\fsubmod$,
 it suffices to see the submodularity of $\fsubmod^{\star} \colon 2^{R\cap H} \rightarrow
 \Rset_+$
 defined by $\fsubmod^{\star}(X)=1-\sum_{\psi \in \Psi_{X}}p(\psi\mid \phi)$
 for each $X \subseteq R \cap H$.
 This is because
 $\fsubmod(X)$ depends only on $X \cap H$ for any $X \subseteq R$,
 and any function $g$ that returns the minimum of a
 constant and the value
 of a function $g^{\star}$
 is submodular if $g^{\star}$ is submodular.
 Let $X \subseteq Y \subseteq R \cap H$ and $v \in (R \cap H)\setminus Y$.
 Then, $\fsubmod^{\star}(X \cup \{v\})- \fsubmod^{\star}(X)=
 \sum_{\psi \in \Psi_{X} \setminus \Psi_{X \cup \{v\}}}p(\psi\mid \phi)$.
 Notice that a full realization is
 included in $\Psi_{X} \setminus \Psi_{X \cup \{v\}}$
 when it is consistent with the most-likely vectors
 of all nodes in $X$ but inconsistent with that of $v$.
 Thus, we have $\Psi_{X} \setminus \Psi_{X \cup \{v\}}
 \supseteq \Psi_{Y} \setminus \Psi_{Y \cup \{v\}}$,
 which implies 
 $\fsubmod^{\star}(X \cup \{v\})-\fsubmod^{\star}(X)
 =\sum_{\psi \in \Psi_{X} \setminus \Psi_{X \cup \{v\}}}p(\psi\mid \phi)
 \geq
\sum_{\psi \in \Psi_{Y} \setminus \Psi_{Y \cup \{v\}}}p(\psi\mid \phi)
 = \fsubmod^{\star}(Y \cup \{v\})-\fsubmod^{\star}(Y)$.
 Therefore, $\fsubmod^{\star}$ is submodular.
\end{proof}

Function $\fsubmod$ does not return an integer value. We can define a
polymatroid $\fsubmod'$ by multiplying $\fsubmod$ by $2Mp(\phi)$.

If $\fsubmod(R) < 1/2$, then we define $\Hsubmod$ as $H$.
Otherwise, we solve the instance of the polymatroid Steiner tree problem
with $G[R]$, $w'$, and $\fsubmod'$,
and define $\Hsubmod$ as the set of nodes in $H$ spanned by
the computed tree.

From the definition of the polymatroid Steiner trees,
$\fsubmod(\Hcds \cup U)=1/2$ holds when $\fsubmod(R)=1/2$.
Note that 
$\fsubmod(\Hcds \cup U)=1/2$ indicates that the probability (conditioned on $\phi$)
of the full realizations
consistent with the most-likely observations
of the nodes in $\Hcds$ is at most a half.
Hence in this case, by choosing the nodes in $\Hcds$ and by observing the
states of their neighbors,
we can decrease the probability of the remaining full realizations
by a factor of at least a half.

\paragraph{Adaptive policy}
After computing $\Hcds$ and $\Hsubmod$, 
we compare their weights and let $H^*$ be the one of smaller weight.
Recall that $\Hsubmod=H$ if $\fsubmod(H \cup U) <1/2$.
Hence,
$H^*=\Hcds$ in this case.

In the adaptive policy, we repeat choosing nodes in $H^*$ in an
arbitrary order such that the chosen nodes and $U$ form a connected
subgraph of $G$. If 
the observation given by choosing a node $v$
is inconsistent with $\xi_v$, then the policy stops choosing nodes and
proceeds to the next round.
Otherwise, the policy continues choosing nodes in $H^*$ unless all nodes have already been chosen.
The pseudo-code of our algorithm for computing this policy is given in Algorithm~\ref{alg.proposal}.

\begin{algorithm}[t]
  \caption{Full feedback model}
 \label{alg.proposal}
 \SetKwInOut{Input}{Input}\SetKwInOut{Output}{Output}
 \Input{undirected graph $G=(V,E)$, node weights $w\colon V\rightarrow
 \Rset_+$, full realizations $\Psi$, probability distribution $p\colon
 \Psi\rightarrow [0,1]$, and root node $r \in V$ ($\psi(r)=1$ for all
 $\psi\in \Psi$)}
  \Output{$U \subseteq V$}
 \BlankLine
 $U \longleftarrow \{r\}$\;
 \ForEach{$v \in V$}{
 \lIf{$v \in N[r]$ and $v$ is active}{$\phi(v) \longleftarrow 1$}
 \lElseIf{$v \in N[r]$}{remove $v$ from $G$}
 \lElse{$\phi(v) \longleftarrow *$}
 }
 \While{$U$ does not dominate $V$}{
 $H \longleftarrow \{v \in V\setminus U\colon \sum_{\psi\in \Psi_{\phi}:v\in
 A_{\psi}} p(\psi\mid \phi) > 1/2\}$\;
 $R \longleftarrow $ connected component including $r$ in $G[H\cup U]$\;
 compute $\Hcds, \Hsubmod \subseteq R \cap H$\;
 $H^* \longleftarrow \argmin_{H' \in \{\Hcds,\Hsubmod\}} w(H')$\;
 \While{$H^* \not\subseteq U$}{
 choose a node $v \in (H^* \setminus U) \cap N[U]$\; 
 $U\longleftarrow U \cup \{v\}$\;
 \lForEach{$u \in N[v]\setminus \supp(\phi)$}{
 $\phi(u)\longleftarrow $ the state of $u$
 }
 \lIf{$\phi(u) \neq \xi_v(u)$ for $\exists u \in N[v]$}
 {break}
 }
 remove $u \in V$ from $G$ if $\psi(u)=0$ for $\forall \psi \in \Psi_{\phi}$\;
 
 remove the connected components not including $r$ from $G$
 }
 output $U$
\end{algorithm}

\paragraph{Example}
\begin{figure}
\centering
 \includegraphics[scale=.8]{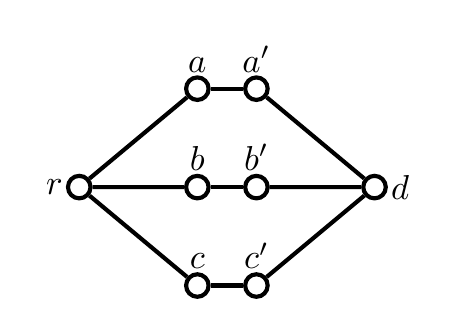}
\caption{An example for illustrating behavior of Algorithm~\ref{alg.proposal}}
\label{fig.example}
\end{figure}

Let us illustrate how Algorithm~\ref{alg.proposal} computes a solution for an example.
Let the given graph $G$ be the one illustrated in Fig.~\ref{fig.example}. 
The root node is $r$,
and exactly one of nodes $a'$, $b'$, $c'$ is active  with probability $1/3$.
(for example, with probability $1/3$, $a$ is active and $b,c$ are inactive).
All of the other nodes are active with probability $1$.
The weights of $a,b,c$ are defined so that $w(a) < w(b)<w(c)$.
The weights of the other nodes are all 0.
We explain behavior of Algorithm~\ref{alg.proposal} for this instance.

Initially, $U=\{r\}$, and the algorithm knows that $r,a,b,c,d$  are active.
Since none of $a',b',c'$ is active with probability at least 1/2, $H=\{a,b,c,d\}$
and $R=\{r,a,b,c\}$.
In the polymatroid Steiner tree problem with $\fcds$,
the constraint demands that the node set spanned by the solution dominates all the nodes whose states are not revealed (i.e., $a',b',c'$). Thus, $\Hcds=R$. The constraint of the problem with $\fsubmod$ demands that 
at least two of $a',b',c'$ are dominated by the node set spanned by the solution.
Hence, $\Hsubmod=\{r,a,b\}$  if we solve the polymatroid Steiner tree problem optimally
(here, we assume that an exact algorithm for the polymatroid Steiner tree problem is used).
Then, the algorithm set $H^*$ to $\Hsubmod$, and the policy chooses $a$ and $b$ sequentially.
Suppose that $a$ is chosen first. If $a'$ is revealed to be active, the policy terminates the first iteration.
Then, it chooses $a'$ in the next iteration and outputs $\{r,a,a'\}$ as a solution in this case.
If $a'$ is inactive, then the policy chooses $b$ and observes the state of $b'$.
If $b'$ is active, then the policy chooses $b'$ in the second iteration and outputs $\{r,a,b,b'\}$ as a solution.
If $b'$ is inactive, then the policy chooses $c,c'$ in the second iteration, and outputs $\{r,a,b,c,c'\}$ as a solution.
This policy $\pi$ achieves $\favg(\pi)= w(a) + 2/3\cdot w(b) + 1/3 \cdot w(c)$.

\subsection{Performance guarantee for the full feedback model}
\label{subsec.guarantee}

Let $\pi$ denote the adaptive policy given by
Algorithm~\ref{alg.proposal}.
In this section, we prove the following performance guarantee on $\pi$.

 \begin{theorem}
  \label{thm:approx_ratio}
  Let $\pi$ denote the policy given by Algorithm~\ref{alg.proposal},
  and let $\alpha$ be the approximation factor for the node-weighted polymatroid Steiner
  tree problem.
  Then, $\pi$ achieves the approximation factor $2\alpha  (1+\lg (1/\delta))$.
 \end{theorem}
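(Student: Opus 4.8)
The plan is to compare the expected cost of $\pi$ against that of an arbitrary policy $\pi^*$ by charging the cost incurred in each round to the cost of $\pi^*$, and then summing over rounds with the observation that only $O(\lg(1/\delta))$ rounds can occur along any root-to-leaf trajectory of $\pi$. First I would fix a policy $\pi^*$ and, conditioned on reaching a round with observation $\phi$ and chosen set $U$, bound the weight $w(H^*)$ added in that round. The key structural fact is that $H^*$ has weight at most $\min\{w(\Hcds),w(\Hsubmod)\}$, so it suffices to bound each of these by $\alpha$ times something comparable to what $\pi^*$ must pay conditioned on $\phi$. For $\Hcds$: the node set $\Hcds\cup U$ satisfies $\fcds(\Hcds\cup U)=|V\setminus\supp(\phi)|=\fcds(R)$, so $\Hcds$ is a feasible polymatroid Steiner solution, and its weight is at most $\alpha$ times the optimal such tree; I would argue that the optimal such tree has weight at most (roughly) the conditional expected residual cost of $\pi^*$, using that any policy that finishes the job for the full realizations in $\Psi_\phi$ must in particular, for each uncovered node $v$, either dominate $v$ or ``explain'' why $v$ need not be dominated, which is exactly what $\fcds$ measures. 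For $\Hsubmod$ the corresponding statement is that choosing $\Hsubmod$ drives $\fsubmod$ down to $1/2$, i.e.\ halves the conditional probability mass of surviving realizations.

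The second ingredient is the potential/round-counting argument. Along any fixed trajectory of $\pi$, consider the quantity $p(\phi)$ (the probability of the observation accumulated so far). I would show that each completed round decreases this quantity by a factor of at least $1/2$: a round ends either because the job is done, or because the observed feedback was inconsistent with some $\xi_v$ — and in the latter case, because we chose the cheaper of $\Hcds$ and $\Hsubmod$ and the inconsistency event has been accounted for, the surviving probability mass shrinks by a factor $\ge 1/2$ (this is where the $\fsubmod(\cdot)=1/2$ threshold and the ``smaller weight'' choice interact: when $H^*=\Hsubmod$ the halving is immediate; when $H^*=\Hcds$ one needs that $w(\Hcds)\le w(\Hsubmod)$ together with the fact that finishing along the most-likely vectors also suffices). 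Since $p(\phi)\ge\delta$ always, there can be at most $1+\lg(1/\delta)$ rounds on any trajectory. Hence $w(\pi,\psi)\le \sum_{\text{rounds}} w(H^*_{\text{round}})$ with at most $1+\lg(1/\delta)$ terms.

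Putting these together: in expectation over $\psi$, the cost of $\pi$ is at most $(1+\lg(1/\delta))$ times the maximum per-round charge, and each per-round charge is at most $\alpha$ times (a quantity bounded by) $\favg(\pi^*)$; the factor $2$ comes from combining the two feedback-consistency cases (or equivalently from the $\min\{1/2,\cdot\}$ truncation and the ``choose the cheaper half'' step, which loses a factor of $2$ relative to charging each half separately). This yields $\favg(\pi)\le 2\alpha(1+\lg(1/\delta))\favg(\pi^*)$ for every $\pi^*$, which is the claim.

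I expect the main obstacle to be the per-round charging lemma — specifically, showing that the optimal polymatroid Steiner tree for $(G[R],w',\fcds')$ (resp.\ $\fsubmod'$) has weight at most a constant times the conditional expected future cost of $\pi^*$. This requires extracting, from the decision tree of $\pi^*$ restricted to realizations in $\Psi_\phi$, a single tree in $G[R]$ whose spanned node set is $\fcds$-feasible and whose weight is dominated by $\favg$-type quantities; the subtlety is that $\pi^*$ is adaptive (different realizations induce different chosen sets) whereas a Steiner tree is one fixed object, so one needs an averaging or union argument over the realizations, carefully using that $w'$ zeroes out $U$ and that the most-likely vectors $\xi_v$ are what make the ``exploitation'' accounting consistent. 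The round-counting half is comparatively routine once the threshold $1/2$ and the $\min\{\Hcds,\Hsubmod\}$ rule are in place.
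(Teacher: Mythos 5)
Your overall architecture matches the paper's: a per-round charging lemma bounding $w(H^*)$ against the optimal residual policy, a round-counting argument showing that $p(\phi)$ halves in every non-final round so that at most $1+\lg(1/\delta)$ rounds occur, and a final summation over rounds (which in the paper also requires a small partition lemma, Lemma~\ref{lem.opt_part}, asserting that the conditional residual optima weighted by the branch probabilities sum to at most $\favg(\pi^*)$ --- you gloss over this, but it is routine). The round-counting half of your sketch is essentially the paper's Lemma~\ref{lem.numround} and is fine.

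The genuine gap is exactly where you flag it: the per-round charging lemma. You state that one must extract, from the adaptive decision tree of $\pi^*$, a single node set that is feasible for one of the two polymatroid Steiner instances and whose weight is comparable to $\favg(\pi^*(U,\phi))$, and you propose an unspecified ``averaging or union argument over the realizations.'' That is not how the difficulty is resolved, and an averaging or union argument over all branches does not obviously work: different branches of $\pi^*$ choose different, possibly disjoint, node sets, and a union over branches can be far heavier than $\favg(\pi^*)$. The paper's resolution is to follow a \emph{single} branch of $\pi^*$, namely the trajectory $(v_1,\ldots,v_l)$ obtained when every observation returned is the most probable one. Letting $j$ be the last index at which the surviving probability mass still exceeds $1/2$, one shows (i) $v_1,\ldots,v_{j+1}\in H$ and the feedback along the first $j$ steps coincides with the most-likely vectors $\xi_{v_i}$; (ii) if $j=l$ then $V_l\cup U$ is feasible for the $\fcds$ instance, while if $j<l$ then $V_{j+1}\cup U$ is feasible for the $\fsubmod$ instance because its surviving mass has dropped to at most $1/2$; and (iii) this trajectory is realized with probability at least $1/2$, so $w(V_{j+1})\le 2\favg(\pi^*(U,\phi))$. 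Step (iii) is the true source of the factor $2$ in the approximation ratio, not the $\min\{1/2,\cdot\}$ truncation or the ``choose the cheaper half'' rule as you suggest. Without this most-likely-trajectory construction the charging lemma, and hence the theorem, is not established.
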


Let $U \subseteq V$ and $\phi \in \Psi^*$ such that $\supp(\phi)=U \cup N[U]$.
We consider a residual instance 
of the robust CDS problem after choosing nodes in $U$ and the obtained
observations are represented by $\phi$.
That is, when a policy for this instance is invoked,
the nodes in $U$ have been already chosen and the states of those nodes and their neighbors 
are given as $\phi$. Let $\pi^*(U,\phi)$ be an optimal policy
for this residual instance.
Note that $\favg(\pi^*(U,\phi))$ does not count the weights of nodes in $U$.

 \begin{lemma}
  \label{lem.cost}
  Let $U$ and $\phi$ be those at the beginning of some round of $\pi$,
  and let $H^*$ denote the one computed in this round. Then,
  $w(H^*) \leq 2\alpha \favg(\pi^*(U,\phi))$.
 \end{lemma}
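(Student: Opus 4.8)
The plan is to bound the weight of $H^*$ by comparing it against a feasible (fractional or integral) solution to one of the two node-weighted polymatroid Steiner tree instances solved in the round, and then to argue that the cost of that feasible solution is itself at most $2\favg(\pi^*(U,\phi))$. Since $H^*=\argmin\{w(\Hcds),w(\Hsubmod)\}$, it suffices to produce one instance among the two for which a feasible solution of cost at most $2\favg(\pi^*(U,\phi))$ exists; then the $\alpha$-approximation algorithm returns a solution of cost at most $2\alpha\favg(\pi^*(U,\phi))$, and $H^*$ is no heavier. Which of the two instances to use should be dictated by whether the optimal residual policy $\pi^*(U,\phi)$ is ``exploitation-heavy'' or ``exploration-heavy''; I expect the argument to split on whether $\pi^*(U,\phi)$, run on the most-likely realization (the one extending $\phi$ in which exactly the nodes of $H$ are active among $V\setminus U$), already drives down the residual uncertainty $\fsubmod$ below $1/2$, or whether it completes the CDS while that uncertainty is still at least $1/2$.

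First I would fix the most-likely full realization $\hat\psi\in\Psi_\phi$ consistent with $\phi$ — the one in which every node of $H$ is active and every node of $V\setminus(U\cup H)$ is inactive (this is well-defined up to nodes outside $R$, which are disconnected from $r$ anyway) — and consider the run of $\pi^*(U,\phi)$ on $\hat\psi$. Let $S$ be the set of nodes this run chooses; since the solution set must stay connected with $U$ and dominate everything not yet revealed, $S\cup U$ induces a connected subgraph of $G$ containing $r$, and $S\subseteq H$ after the inactive removals (a policy never needs to pick a node known-or-likely inactive when it is charged for it). The tree spanning $S\cup U$ inside $G[R]$ is then feasible for the $\fcds$ instance if $S\cup U$ dominates all of $V\setminus\supp(\phi)$ — which it does precisely when $\pi^*(U,\phi)$ terminates on $\hat\psi$ having certified a CDS without the feedback ever contradicting the most-likely vectors; and it is feasible for the $\fsubmod$ instance if the collected most-likely observations cut the conditional probability mass to at most $1/2$, i.e. $\fsubmod(S\cup U)=1/2$. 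The content of the lemma is that at least one of these two eventualities must occur for the run on $\hat\psi$ — if the feedback along the way is everywhere consistent with the $\xi_v$'s, then either enough gets dominated (exploitation succeeds) or enough mass is eliminated (exploration succeeds); and the cost $w(S)$ of either feasible solution is at most $\favg(\pi^*(U,\phi))$, with the factor $2$ slack absorbing the possibility that we must take a slightly larger connected superset of $S$ to respect $G[R]$, or that $w'$ differs from $w$ only on the zero-weight set $U$.

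The main obstacle I anticipate is the case analysis establishing that the run of $\pi^*(U,\phi)$ on the single realization $\hat\psi$ is genuinely a feasible solution to one of the two Steiner instances, and pinning down where the factor $2$ (rather than $1$) is actually needed. In particular one must handle: (i) nodes that $\pi^*(U,\phi)$ chooses which lie outside $R$ or outside $H$ — argue these are unnecessary after the removal rules, or charge them harmlessly; (ii) the gap between ``$S\cup U$ connected in $G$'' and ``$S\cup U$ connected in $G[R]$'', since the Steiner instance lives on $G[R]$ — this should be fine because $R$ is exactly the $r$-component of $G[U\cup H]$ and $S\subseteq H$; and (iii) the dichotomy itself: if $\fsubmod(R)<1/2$ then the round sets $\Hsubmod=H$ and $H^*=\Hcds$, so I must separately ensure the $\fcds$ bound holds in that degenerate case, which it does because then the residual policy cannot rely on eliminating scenarios and must dominate outright. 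Once the right feasible solution is identified in each case, the remaining inequality $w(H^*)\le w(H^\star_{\text{feasible}})\le\alpha\cdot w(\text{feasible solution})\le 2\alpha\favg(\pi^*(U,\phi))$ is immediate from the approximation guarantee and monotonicity of the weights.
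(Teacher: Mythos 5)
Your overall architecture is right---exhibit a feasible solution to one of the two polymatroid Steiner tree instances whose weight is at most $2\favg(\pi^*(U,\phi))$, then invoke the $\alpha$-approximation and the minimality of $H^*$---and the dichotomy you sense (``either exploitation completes or exploration halves the mass'') is exactly the one the paper uses. But there is a genuine gap in the cost bound, and it is the heart of the lemma. You take $S$ to be the \emph{entire} run of $\pi^*(U,\phi)$ on a single most-likely realization $\hat\psi$ and assert $w(S)\le\favg(\pi^*(U,\phi))$ up to a factor $2$ coming from connectivity slack and the $w$-versus-$w'$ discrepancy. That inequality is false in general: from a single realization you only get $\favg(\pi^*)\ge p(\hat\psi)\,w(\pi^*,\hat\psi)$, i.e.\ $w(S)\le\favg(\pi^*)/p(\hat\psi)$, and $p(\hat\psi)$ can be as small as $\delta$ (each node of $H$ is active with probability $>1/2$ \emph{individually}, but the joint event that all of $H$ is active and everything else inactive can be exponentially unlikely; under correlations it can even have probability zero, in which case the run of $\pi^*$ on $\hat\psi$ is not meaningful at all). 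The missing idea is \emph{truncation}: the paper follows the most-likely feedback branch of $\pi^*$'s decision tree node by node, producing realizations $\phi_0\preceq\phi_1\preceq\cdots$, and stops at the largest index $j$ with $\sum_{\psi\in\Psi_{\phi_j}}p(\psi\mid\phi)>1/2$. The prefix $V_{j+1}=\{v_1,\dots,v_{j+1}\}$ is then chosen by $\pi^*$ with probability at least $1/2$, which is the true source of the factor $2$: $w(V_{j+1})\le 2\favg(\pi^*)$. Simultaneously, the definition of $j$ is exactly what makes $V_{j+1}\cup U$ feasible for the $\fsubmod$ instance when $j<l$ (the mass has dropped to $\le 1/2$), and $V_l\cup U$ feasible for the $\fcds$ instance when $j=l$. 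Your version, which runs $\pi^*$ to completion before checking feasibility, gets feasibility for free (it is monotone) but loses the cost bound entirely.

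Two smaller points. First, feasibility for the $\fcds$ instance is not ``$S\cup U$ dominates all of $V\setminus\supp(\phi)$''; it is that every node of $V\setminus\supp(\phi)$ is dominated, or is inactive, or is disconnected from the root in every scenario consistent with the accumulated most-likely observations---this weaker condition is what $\pi^*$ actually guarantees at termination, and it still forces $\fcds(V_l\cup U)=\fcds(R)$. Second, the claim that the chosen nodes lie in $H$ needs the truncation as well: the paper derives $v_i\in H$ for $i\le j+1$ from the fact that $\Psi_{\phi_{i-1}}\supseteq\Psi_{\phi_j}$ still carries mass more than $1/2$, not from properties of a single fixed realization.
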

  \begin{proof}
   In this proof, we simply denote $\pi^*(U,\phi)$ by $\pi^*$.
   Let $(v_1,\ldots,v_l)$ be the sequence of nodes chosen by $\pi^*$
   when it always receives feedback of the highest probability,  and for each $i\in [l]$, 
   let $\phi_i \in \Psi^*$ represent the realization
   maintained after choosing $v_i$ and observing the states of the neighbors of $v_i$.
   The assumption on feedback means that
   $\phi_{i}$ maximizes $\sum_{\psi\in \Psi_{\phi_{i}}}
   p(\psi\mid \phi_{i-1})$ over possible realizations, 
   where, by convention, $\phi_0$ denotes $\phi$.
   Let $j$ be the maximum index such that
   $\sum_{\psi\in \Psi_{\phi_j}} p(\psi\mid \phi) >
   1/2$.
   The maximality of $j$ implies
   $\sum_{\psi\in \Psi_{\phi_{j+1}}} p(\psi \mid \phi) \leq
   1/2$  if $j < l$.
   We let $V_i$ denote the node set $\{v_1,\ldots,v_i\}$ for each $i\in [l]$.

   Firstly we show that $v_i \in H$ for each $i \in [j+1]$.
   Recall that any policy in the full feedback model always chooses a
   node which is known to be active. Hence, $\phi_{i-1}(v_i)=1$.
   This indicates that any $\psi \in \Psi_{\phi_{i-1}}$ 
   satisfies $v_i \in A_{\psi}$.
   Moreover, $\Psi_{\phi_{j}}\subseteq \Psi_{\phi_{i-1}} \subseteq \Psi_{\phi}$ because
   $\phi_j \succeq \phi_{i-1}\succeq \phi$.
   Therefore,
   $\sum_{\psi\in \Psi_{\phi}:v_i \in A_{\psi}}p(\psi\mid \phi) \geq
   \sum_{\psi\in \Psi_{\phi_{i-1}}}p(\psi\mid \phi) 
   \geq \sum_{\psi\in \Psi_{\phi_{j}}}p(\psi\mid \phi) 
   > 1/2$,
   which means that $v_i \in H$.

   Secondly, we prove that the observations given by
   choosing $v_i$ are consistent with $\xi_{v_i}$
   if $i \in [j]$. For this, 
   let $x \in \{0,1,*\}^V$ represent the 
   states of the neighbors of $v_i$ observed at the choice of $v_i$
   ($x(u)=*$ if $u\not\in N[v_i]$,  $x(u)=1$ if $u \in N[v_i]$ and $u$ is active,
   and $x(u)=0$ otherwise)
   and suppose that $x(u) \neq \xi_{v_i}(u)$ for some $u \in N[v_i]$.
   From the definitions of $\xi_{v_i}$ and $H$, we have
   $\sum_{\psi\in \Psi_{\phi}: \psi(u)=x(u)}p(\psi\mid \phi) =
   \sum_{\psi\in \Psi_{\phi}: \psi(u)\neq \xi_{v_i}(u)}p(\psi\mid \phi) \leq 1/2$,
   from which 
   $\sum_{\psi\in \Psi_{\phi}: \psi\succeq x}p(\psi\mid \phi)\leq 1/2$ follows.
   However, this means that 
   $\sum_{\psi\in \Psi_{\phi_{i}}}p(\psi\mid \phi)\leq 1/2$ holds
   because $\sum_{\psi\in \Psi_{\phi_{i}}}p(\psi\mid \phi)\leq 
   \sum_{\psi\in \Psi_{\phi}: \psi\succeq x}p(\psi\mid \phi)$.
   This contradicts the definition of $j$, and hence $x = \xi_{v_i}$.
   
  Thirdly, we show that $w(H^*) \leq \alpha w(V_{j})$ holds when $j=l$.
   For this, we suppose $j=l$ for a moment.
   The union of $U$ and $V_l=\{v_1,\ldots,v_l\}$ 
   is a feasible solution for the instance of the polymatroid Steiner tree problem solved for computing $\Hcds$.
   Indeed,
   $V_l \cup U$ induces a connected
   subgraph of $G[R]$.
   Moreover, when $\pi^*$ terminates,
   each node in $V\setminus \supp(\phi)$
   is dominated by $V_l \cup U$, is revealed to be inactive,
   or is disconnected from the root
   in any graphs consistent with the observations.
   This means that $\fcds(V_l \cup U)=|V \setminus \supp(\phi)|=\fcds(R)$, and thus 
   $V_l \cup U$ is feasible for the instance.
   Therefore, $\alpha w(V_l) \geq w(\Hcds) \geq w(H^*)$ in this case.

   Next, we prove that 
   $w(H^*) \leq \alpha w(V_{j+1})$ holds when $j <l$.
   Suppose that $j < l$.
   If the observations given at the choice of $v_{j+1}$
   are consistent with $\xi_{v_{j+1}}$,
   then $\sum_{\psi\in \Psi_{V_{j+1}}}p(\psi\mid \phi)=\sum_{\psi\in \Psi_{\phi_{j+1}}}p(\psi \mid \phi)$.
   Otherwise,
   $\sum_{\psi\in \Psi_{V_{j+1}}}p(\psi\mid \phi)\leq\sum_{\psi\in \Psi_{\phi_{j+1}}}p(\psi \mid \phi)$ from the definition of $\phi_{j+1}$.
   In either of the cases,
   $\sum_{\psi\in \Psi_{V_{j+1}}}p(\psi\mid \phi)\leq\sum_{\psi\in \Psi_{\phi_{j+1}}}p(\psi \mid \phi)$,
   and the right-hand side of this inequality is at most $1/2$ because of the definition of $j$.
   This means that $\fsubmod(V_{j+1} \cup U)=1/2=\fsubmod(R)$, and 
   $V_{j+1} \cup U$ is feasible for the instance of the polymatroid Steiner tree problem
   solved for computing $\Hsubmod$.
   This implies $\alpha w(V_{j+1})\geq w(\Hsubmod)\geq w(H^*)$.
   
   Now, to prove the lemma, it suffices to show $w(V_{l}) \leq 2\favg(\pi^*)$
   if $j = l$,
   and 
   $w(V_{j+1}) \leq 2\favg(\pi^*)$
   otherwise.
   From the definition of $j$,
   $\phi_{i}$ appears with probability at least $1/2$ in the execution of $\pi^*$ for each $i \in [j]$.
   Hence, 
    with probability at least $1/2$, 
   policy $\pi^*$ chooses the nodes in $V_{j}$,
   and $v_{j+1}$ when $j < l$.
   Therefore, the required inequalities hold.
  \end{proof}

  \begin{lemma}
\label{lem.numround}
   The number of rounds executed in $\pi$ is at most $1+\lg (1/\delta)$.
  \end{lemma}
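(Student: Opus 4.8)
The plan is to track, for each round $t$ of $\pi$, the mass $\Phi_t := \sum_{\psi \in \Psi_{\phi_t}} p(\psi)$ of the still‑possible full realizations, where $\phi_t$ is the observation recorded at the start of round $t$ (so $\phi_1$ is the observation right after the neighbors of $r$ are revealed). Since $\phi_{t+1}$ extends $\phi_t$, the set $\Psi_{\phi_t}$ only shrinks, so $\Phi_{t+1} \le \Phi_t$; moreover $\Phi_1 \le 1$, while the true full realization $\psi^\star$ satisfies $\psi^\star \in \Psi_{\phi_t}$ and $p(\psi^\star) \ge \delta$ throughout, so $\Phi_t \ge \delta$ in every round. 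The lemma then reduces to the single claim that every round other than possibly the last one satisfies $\Phi_{t+1} \le \tfrac12 \Phi_t$: if $T$ is the final round, this gives $\delta \le \Phi_T \le \Phi_1 \cdot 2^{-(T-1)} \le 2^{-(T-1)}$, and rearranging yields $T \le 1 + \lg(1/\delta)$.

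To prove the claim I would split on how a round ends. If it ends because a chosen node $v \in H^*$ produces an observation inconsistent with its most-likely vector $\xi_v$, then some vertex $u \in N[v]$ is observed in the state $1-\xi_v(u)$; but $\xi_v(u)$ is, by the definition of $H$, a most-probable state of $u$ conditioned on $\phi_t$, so the observed value $1-\xi_v(u)$ has conditional probability at most $\tfrac12$, and since $\Psi_{\phi_{t+1}} \subseteq \{\psi \in \Psi_{\phi_t} : \psi(u) = 1-\xi_v(u)\}$ we get $\Phi_{t+1} \le \tfrac12 \Phi_t$. If instead the round ends because all of $H^*$ has been chosen with observations matching the most-likely vectors, and $H^* = \Hsubmod$ with $\fsubmod(R) = \tfrac12$, then the polymatroid constraint gives $\fsubmod(\Hsubmod \cup U) = \fsubmod(R) = \tfrac12$, which by the definition of $\fsubmod$ says that the realizations consistent with all the vectors $\xi_v$ ($v \in \Hsubmod \cap H$) carry conditional probability at most $\tfrac12$ given $\phi_t$; these are exactly the surviving realizations, so again $\Phi_{t+1} \le \tfrac12 \Phi_t$.

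The one remaining subcase is a round that ends by exhausting $H^* = \Hcds$ (note that when $\fsubmod(R) < \tfrac12$ we have $\Hsubmod = H$ and hence $H^* = \Hcds$, so this subcase also covers that situation), and here I would argue that such a round must be the last one. The polymatroid constraint forces $\fcds(\Hcds \cup U) = \fcds(R) = |V \setminus \supp(\phi_t)|$, which by the definition of $\fcds$ means $\sum_{\psi \in \Psi_{\Hcds \cap H} : x \sim_\psi r} p(\psi) = 0$ for every still-unobserved vertex $x \notin N[\Hcds \cup U]$; combined with the fact that every surviving realization lies in $\Psi_{\Hcds \cap H}$, this says that in every surviving realization each such $x$ is inactive or disconnected from $r$.

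Turning this into actual termination of the \textbf{while} loop is the step I expect to require the most care. After the round the observed set equals $N[\Hcds \cup U]$, and $\Hcds \cup U$ is connected, contains $r$, and is active in every surviving realization, hence active and $r$-connected there. I would show that the graph left after the two cleanup deletions (remove every vertex inactive in all surviving realizations, then remove the components avoiding $r$) has no vertex outside $N[\Hcds \cup U]$: if some surviving path joined such a vertex to $r$, let $w$ be its first vertex inside $N[\Hcds \cup U]$ and $w'$ the vertex just before it; then $w$ is active and $r$-connected in every surviving realization, so any realization in which $w'$ is active would have $w' \sim_\psi r$, forcing $w'$ to be inactive in all surviving realizations (otherwise $\sum_{\psi : w' \sim_\psi r} p(\psi) > 0$), whence $w'$ was already deleted and no such path exists. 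Thus after the round $\Hcds \cup U$ dominates the whole remaining graph, the loop stops, and together with the halving claim the counting $\delta \le \Phi_T \le 2^{-(T-1)}$ completes the proof.
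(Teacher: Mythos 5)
Your proof is correct and takes essentially the same route as the paper's: each round other than the last halves the probability mass of the surviving full realizations (via the same three cases --- inconsistent feedback, exhausting $\Hsubmod$ with $\fsubmod(\Hsubmod\cup U)=1/2$, or exhausting $\Hcds$, which forces termination), and this mass is bounded below by $\delta$. Your only departure is that you spell out in detail why exhausting $\Hcds$ makes the round the last one, which the paper asserts in a single sentence.
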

  \begin{proof}
   Let $\phi$ and $\phi'$ be the realizations at the beginning and the end of a round.
   We prove that $p(\phi)/2 \geq p(\phi')$ holds unless it is the last round.
   This indicates that the number of rounds is $1 + \lg (1/\delta)$.

   First, let us consider the case where an observation given at the choice of a node 
   $v \in H^*$ is inconsistent with $\xi_v$ in this round.
   By the definition of $\xi_v$,
   $\sum_{\psi \succeq \Psi_{\phi}\colon \psi \not\succeq \xi_v}p(\psi\mid \phi) \leq 1/2$ holds.
   Since 
   $\phi'$ is a realization in $\Psi_{\phi}$
   such that $\phi'\not\succeq \xi_v$ in this case, 
   we have $p(\phi')\leq p(\phi)/2$.
   
   In the following,
   suppose that the observation given at the choice of each node 
   $v \in H^*$ is consistent with $\xi_v$.
   Thus, all nodes in $H^*$ are chosen in this round.
   If $H^*=\Hcds$,
   then choosing all nodes in $\Hcds$ indicates that the set of nodes
   chosen up to the end of this round dominates the connected component
   of active nodes including the root, and hence this round is the last.
   Otherwise (i.e., $H^*=\Hsubmod$), $p(\phi'\mid \phi) \leq 1/2$ because $\phi'$
   is consistent with $\xi_v$ for all $v \in \Hsubmod$
   and $\fsubmod(\Hsubmod \cup U)=1/2$.
  \end{proof}

  \begin{lemma}
\label{lem.opt_part}
   Let $\{\Phi_1,\ldots,\Phi_k\}$ be a partition of $\Psi$.
   For each $i \in [k]$, 
   let $p_i=\sum_{\psi \in \Phi_i}p(\psi)$,
   consider 
   an instance such that 
   each $\psi \in \Phi_i$ appears with probability $p(\psi)/p_i$,
   and let $\pi^*_i$ be an optimal policy for this instance.
   Moreover, $\pi^*$ denotes an optimal policy for the original instance.
   Then, $\sum_{i\in [k]} p_i \favg(\pi^*_i) \leq \favg(\pi^*)$.
  \end{lemma}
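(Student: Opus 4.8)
The plan is to prove the inequality termwise over the partition and then sum, exploiting only the optimality of each $\pi^*_i$ on its conditional instance together with the law of total expectation. The observation I would establish first is that the globally optimal policy $\pi^*$ is itself a \emph{feasible} policy for each conditional instance $i$. A policy is nothing more than a rule mapping observation histories to actions, so the trajectory $\pi^*$ follows on a fixed full realization $\psi$---and hence the chosen set $U(\pi^*,\psi)$ and its weight $w(\pi^*,\psi)$---depends only on $\psi$, not on the prior distribution with respect to which $\pi^*$ was declared optimal. Since $\Phi_i \subseteq \Psi$ and $\pi^*$ produces a valid CDS of the relevant component for every $\psi \in \Psi$ with positive probability, it does so in particular for every $\psi \in \Phi_i$; thus running $\pi^*$ under the conditional distribution $p(\cdot)/p_i$ is legitimate, and its expected cost there equals $\sum_{\psi \in \Phi_i} w(\pi^*,\psi)\,p(\psi)/p_i$.

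Given this, the inequality for a single index falls out immediately from optimality. Because $\pi^*_i$ is an optimal policy for conditional instance $i$, its average cost cannot exceed that of any feasible competitor, in particular $\pi^*$; this yields $\favg(\pi^*_i) \le \sum_{\psi \in \Phi_i} w(\pi^*,\psi)\,p(\psi)/p_i$ for each $i$. Multiplying through by $p_i$ clears the denominator and gives $p_i\favg(\pi^*_i) \le \sum_{\psi \in \Phi_i} w(\pi^*,\psi)\,p(\psi)$. I would note here that indices with $p_i = 0$ can be discarded, since they contribute nothing to either side.

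To finish, I would sum this bound over $i \in [k]$. Because $\{\Phi_1,\ldots,\Phi_k\}$ partitions $\Psi$, the right-hand sums telescope into $\sum_{\psi \in \Psi} w(\pi^*,\psi)\,p(\psi)$, which is precisely $\favg(\pi^*)$ by definition. Hence $\sum_{i\in[k]} p_i\favg(\pi^*_i) \le \favg(\pi^*)$, exactly as claimed.

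The only genuinely delicate point is the feasibility claim in the first paragraph: I must argue that optimality of $\pi^*$ for the full prior neither invalidates it nor alters its per-realization cost once the prior is restricted to $\Phi_i$. The crux is the separation between a policy viewed as a prior-independent decision rule, whose behavior on each scenario $\psi$ is fixed, and the expected objective $\favg$, which is the only prior-dependent quantity. Once this separation is made explicit, the result follows from optimality plus total expectation, with no submodularity or combinatorial machinery required.
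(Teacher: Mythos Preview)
Your proof is correct and follows essentially the same approach as the paper: both use the global optimum $\pi^*$ (or, in the paper's phrasing, a policy $\pi_i$ defined to satisfy $U(\pi_i,\psi)=U(\pi^*,\psi)$) as a feasible competitor in each conditional instance, invoke the optimality of $\pi^*_i$ termwise, and sum over the partition to recover $\favg(\pi^*)$. Your discussion of why $\pi^*$ remains a legitimate policy under the restricted prior is a welcome elaboration of a step the paper leaves implicit.
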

  \begin{proof}
   Recall that $U(\pi^*,\psi)$ is the output of $\pi^*$ for a full realization
   $\psi$.
   Define $\pi_i$ as the policy for the instance with $\Phi_i$
   such that $U(\pi_i,\psi)=U(\pi^*,\psi)$ for each $\psi \in \Phi_i$.
   By the optimality of $\pi^*_i$, 
   $\favg(\pi^*_i)\leq \favg(\pi_i)=\sum_{\psi\in \Phi_i} U(\pi_i,\psi)p(\psi)/p_i$.
   Hence,
   $\sum_{i \in [k]}p_i \favg(\pi^*_i)\leq \sum_{i\in [k]}
   \sum_{\psi\in \Phi_i} U(\pi^*,\psi)p(\psi)=\favg(\pi^*)$.
  \end{proof}

\begin{proof}[Proof of Theorem~\ref{thm:approx_ratio}]
 We show that the approximation factor of $\pi$ is $2\alpha k$ if the number of rounds
 in $\pi$ is $k$. Since $k\leq 1+\lg (1/\delta)$ by Lemma~\ref{lem.numround},
 this suffices to prove the theorem.

 Let $\pi^*$ be an optimal policy for the given instance.
 The weights of nodes chosen in the first round of $\pi$ are at most $2\alpha\favg(\pi^*)$
 by Lemma~\ref{lem.cost}.
 Let $(U_1,\phi_1),\ldots,(U_l,\phi_l)$ be the pairs of possible solutions and the observations
 at the end of the first round.
 Then, $\Psi_{\phi_1},\ldots,\Psi_{\phi_l}$ form a partition of $\Psi$.
 The number of rounds of $\pi$ 
 in the residual instance with $(U_i,\phi_i)$
 is at most $k-1$ for each $i \in [l]$.
 Hence, 
 under the condition that the full realization extends $\phi_i$,
 the expected objective value achieved by $\pi$ in this residual instance
 is at most $2\alpha(k-1)\favg(\pi^*(U_i,\phi_i))$.
 This means that 
 \begin{align}
  \label{eq.induction}
  &\favg(\pi)  \nonumber \\
  &\leq 2\alpha \favg(\pi^*) 
  + 2\alpha(k-1) \sum_{i=1}^l p(\phi_i) \favg(\pi^*(U_i,\phi_i))
 \end{align}
Note that $\favg(\pi^*(U_i,\phi_i))$ is at most the expected objective value
 achieved by an optimal policy $\pi^*_i$ for the instances with 
 the full realizations in $\Psi_{\phi_i}$.
 Thus, the second term in the right-hand side of \eqref{eq.induction}
is bounded by
\begin{align*}
 &2\alpha (k-1) \sum_{i=1}^l p(\phi_i) \favg(\pi^*(U_i,\phi_i))\\
 & \leq 
 2\alpha(k-1) \sum_{i=1}^l p(\phi_i) \favg(\pi^*_i)
 \leq 2\alpha (k-1)\favg(\pi^*),
\end{align*}
where the second inequality follows from Lemma~\ref{lem.opt_part}.
Therefore, we have 
$\favg(\pi) \leq 2\alpha k \favg(\pi^*)$.
\end{proof}

\subsection{Algorithm and analysis for the local feedback model}
\label{sec.localfeedback}

Our algorithm and its analysis for the local feedback model are almost
the same as those for the full feedback model. In this subsection, we only
highlight the differences between the algorithms.

As in Section~\ref{subsec.algorithm-full}, let $U$ and $\phi$ be the
set of chosen nodes and the realization kept at the beginning of some round.
In the local feedback model, choosing a node $v$ reveals only
the state of $v$.
Because of this, we possibly choose inactive nodes in this
model.
Due to this difference, we define the most-likely observation vector
$\xi_v$ for all nodes $v \in N[R]\setminus U$ by
\[
 \xi_v(u) =
 \begin{cases}
  1 & u= v \in H,\\
  0 & u= v \not\in H, \\
  * & u \neq v.
 \end{cases}
\]

In each round, we solve two instances of the polymatroid Steiner tree
problem to compute $\Hcds$ and $\Hsubmod$.
The graphs in these instances are $G[N[R]]$.
The ranges of the functions $\fcds$ and $\fsubmod$
are also changed to $2^{N[R]}$.
The reason for these changes
is that an optimal policy
may choose inactive nodes even
in the case that the set of active nodes is exactly $H$.
The definition of $\fcds$ is changed to
\begin{align*}
 \fcds(X)=&|(V\setminus \supp(\phi))\cap N[X \cap R]\}|\\
 & \hspace*{1em} +
\sum_{v \in V\setminus N[R]}
\left(1- \sum_{\psi \in \Psi_{X\setminus U}: v \sim_{\psi} r}p(\psi)\right)
\end{align*}
for each $X \subseteq N[R]$.
Notice that the first term in the right-hand side
depends only on $X \cap R$,
but the second term depends on the whole  of $X$.
This is because inactive nodes are not used to dominate other nodes,
but the information on the states of those nodes indicates that 
the full realizations inconsistent with that information
are not realized.
The definition of $\fsubmod$ is changed to
\[
\fsubmod(X)
=\min\left\{\frac{1}{2},1-\sum_{\psi \in \Psi_{X\setminus U}}p(\psi\mid \phi)\right\}
\]
for each $X \subseteq N[R]$.

The other inputs in
these instances are the same as those in the full feedback model.
$\Hcds$ and $\Hsubmod$ are defined as the sets of nodes in
$N[R]\setminus U$ that are spanned by the computed trees.
Then, the policy repeats nodes in $H^*=\argmin_{H' \in
\{\Hcds,\Hsubmod\}}w(H')$.
If the revealed state of a chosen node $v$ is different from the
expectation given by $H$ (i.e., $v$ is active and $v \not\in H$, or $v$
is inactive and $v \in H$), then the policy stops choosing the nodes in this
round and proceeds to the next round.
The full details of this algorithm are shown in
Algorithm~\ref{alg.local}.

\begin{algorithm}[t]
  \caption{$O(\alpha \log 1/\delta)$-approximation algorithm for the local feedback model}
 \label{alg.local}
 \SetKwInOut{Input}{Input}\SetKwInOut{Output}{Output}
 \Input{undirected graph $G=(V,E)$, node weights $w\colon V\rightarrow
 \Rset_+$, full realizations $\Psi$, probability distribution $p\colon
 \Psi\rightarrow [0,1]$, and root node $r \in V$ ($\psi(r)=1$ for all
 $\psi\in \Psi$)}
 \Output{$U \subseteq V$}
 \BlankLine
 $U \longleftarrow \{r\}$, $\phi(v) \longleftarrow 1$\;
 \lForEach{$v \in V\setminus \{r\}$}{$\phi(v) \longleftarrow *$}
 \While{$U$ does not dominate $V$}{
 $H \longleftarrow \{v \in V\setminus U\colon \sum_{\psi\in \Psi_{\phi}:v\in
 A_{\psi}} p(\psi\mid \phi) > 1/2\}$\;
 $R \longleftarrow$ connected component including $r$ in $G[H
 \cup U]$\;
 compute $\Hcds, \Hsubmod \subseteq N[R]\setminus U$,
 and set $H^* \longleftarrow \argmin_{H' \in \{\Hcds,\Hsubmod\}} w(H')$\;
 \While{$H^* \not\subseteq U$}{
 choose a node $v \in H^* \setminus U$ dominated by $U$\;
 $U\longleftarrow U \cup \{v\}$, $\phi(v)\longleftarrow $ the state of $v$\;
 \lIf{$\phi(v)=1$ and $v \in H$, or $\phi(v)=0$ and $v \not\in H$}
 {break}
 }
 remove all nodes $u$ with $\psi(u)=0$ for all $\psi \in \Psi_{\phi}$
 from $G$\;
 remove the connected components not including $r$ from $G$
 }
 output $U$
\end{algorithm}

The performance guarantee on Algorithm~\ref{alg.local}
is given in the following theorem.
Its proof
is the same as that for Theorem~\ref{thm:approx_ratio}
except several minor
details, and hence we omit it from this paper. 

 \begin{theorem}
  \label{thm:approx_ratio_local}
  Let $\pi$ denote the policy given by Algorithm~\ref{alg.local},
  and let $\alpha$ be the approximation factor for the node-weighted polymatroid Steiner
  tree problem.
  Then, $\pi$ achieves the approximation factor $2\alpha  (1+\lg (1/\delta))$.
 \end{theorem}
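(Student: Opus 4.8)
The claim is that Theorem~\ref{thm:approx_ratio_local} — the $2\alpha(1+\lg(1/\delta))$ guarantee for Algorithm~\ref{alg.local} in the local feedback model — holds by essentially the same argument as Theorem~\ref{thm:approx_ratio}. So the plan is to recheck that the three supporting lemmas (Lemmas~\ref{lem.cost}, \ref{lem.numround}, \ref{lem.opt_part}) go through verbatim or with cosmetic changes, and then observe that the inductive combination in the proof of Theorem~\ref{thm:approx_ratio} uses only those lemmas as black boxes and is itself model-independent. Concretely, I would restate the three lemmas for Algorithm~\ref{alg.local} and only re-examine the places where the local feedback model genuinely differs: the enlarged ground set $N[R]\setminus U$ (since an optimal policy may legitimately choose inactive nodes to reach active ones), the redefined $\xi_v$ (which now constrains only the single node $v$), and the redefined $\fcds,\fsubmod$ on $2^{N[R]}$.

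**Key steps, in order.** First I would redo Lemma~\ref{lem.cost} for the local model. Let $(v_1,\dots,v_l)$ be the sequence of nodes $\pi^*(U,\phi)$ chooses when it always receives the most-likely feedback, and let $\phi_i$ be the realization after the $i$-th step; here each step reveals only $\psi(v_i)$. Let $j$ be the largest index with $\sum_{\psi\in\Psi_{\phi_j}}p(\psi\mid\phi)>1/2$. The first sub-claim — $v_i$ is "on the right side" for $i\le j+1$ — splits by whether $v_i$'s most-likely state is active or inactive; if active then $v_i\in H$ exactly as before, and if inactive the most-likely feedback makes $\phi_i$ consistent with $\xi_{v_i}$ trivially since $\xi_{v_i}$ fixes only $v_i$. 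The second sub-claim — feedback at step $i\le j$ is consistent with $\xi_{v_i}$ — is again: an inconsistent observation at $v_i$ would force $\sum_{\psi\in\Psi_{\phi_i}}p(\psi\mid\phi)\le 1/2$ (now because $\xi_{v_i}$ pins down $\psi(v_i)$ and, by definition of $H$ together with the most-likely-feedback convention, the "wrong" value of $\psi(v_i)$ carries conditional probability at most $1/2$), contradicting $i\le j$. Third, feasibility: when $j=l$, $U\cup\{v_1,\dots,v_l\}\subseteq N[R]$ induces a connected subgraph of $G[N[R]]$ and, since $\pi^*$ has terminated, $\fcds(U\cup V_l)=|V\setminus\supp(\phi)|=\fcds(N[R])$ with the local-model definition of $\fcds$ (the first term handles domination by the nodes in $R$, the second handles the probability that unobserved nodes are disconnected or inactive) — hence $w(H^*)\le w(\Hcds)\le\alpha\,w(V_l)$; when $j<l$, the usual argument with $\phi_{j+1}$ and the definition of $j$ gives $\fsubmod(U\cup V_{j+1})=1/2=\fsubmod(N[R])$, so $w(H^*)\le w(\Hsubmod)\le\alpha\,w(V_{j+1})$. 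Finally, since each $\phi_i$ ($i\le j$) occurs with probability $\ge 1/2$ in $\pi^*$, we get $w(V_j)\le 2\favg(\pi^*)$, resp. $w(V_{j+1})\le 2\favg(\pi^*)$, yielding $w(H^*)\le 2\alpha\favg(\pi^*(U,\phi))$.

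Next, Lemma~\ref{lem.numround}: a round ends either because some chosen $v\in H^*$ had an observation inconsistent with $\xi_v$ — but in the local model "inconsistent with $\xi_v$" means precisely $\psi(v)$ disagrees with membership of $v$ in $H$, and by definition of $H$ the probability (conditioned on $\phi$) of the realizations with that disagreement is at most $1/2$, so $p(\phi')\le p(\phi)/2$ — or because all of $H^*$ was chosen consistently, in which case $H^*=\Hcds$ means the round is last, and $H^*=\Hsubmod$ gives $p(\phi'\mid\phi)\le 1/2$ from $\fsubmod(U\cup\Hsubmod)=1/2$. Hence at most $1+\lg(1/\delta)$ rounds. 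Lemma~\ref{lem.opt_part} involves no feedback model at all and is used unchanged. Then the proof of Theorem~\ref{thm:approx_ratio} applies line for line: by induction on the number of rounds $k$, bounding the first round's cost by Lemma~\ref{lem.cost} and the remaining cost by the inductive hypothesis over the partition $\Psi_{\phi_1},\dots,\Psi_{\phi_l}$ of $\Psi$ at the end of round one, then collapsing the sum with Lemma~\ref{lem.opt_part}, gives $\favg(\pi)\le 2\alpha k\favg(\pi^*)\le 2\alpha(1+\lg(1/\delta))\favg(\pi^*)$.

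**Main obstacle.** The only non-routine point is making the two sub-claims in the re-proof of Lemma~\ref{lem.cost} watertight under the weaker feedback: I must be careful that the "always receives most-likely feedback" convention is defined coherently when a step reveals only one bit, and that the inequality tying an inconsistent single-node observation to "conditional mass $\le 1/2$" really does follow from the definition of $H$ (for active $v$) together with the fact that, at the moment $\pi^*$ chooses a node it believes inactive, that node's own conditional active-probability is $\le 1/2$ — i.e. choosing an inactive-looking node is itself only sensible when $v\notin H$. Everything downstream — the feasibility of the Steiner instances on $G[N[R]]$, the round-count bound, and the induction — is then a mechanical transcription of the full-feedback proof with $R$ replaced by $N[R]$ in the relevant places, which is exactly why the paper omits it.
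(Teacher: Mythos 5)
Your proposal is correct and matches the paper's approach exactly: the paper itself gives no separate proof of Theorem~\ref{thm:approx_ratio_local}, stating only that the argument for Theorem~\ref{thm:approx_ratio} carries over with minor modifications, and the modifications you identify (the ground set $N[R]\setminus U$, the single-bit $\xi_v$, and the redefined $\fcds,\fsubmod$) are precisely the right ones, with the round-count and partition lemmas and the final induction unchanged. The only detail worth tightening is the justification that every node chosen by $\pi^*$ under most-likely feedback lies in $N[R]$ (each such node is dominated by an active previously chosen node, which by induction lies in $R$), but this is the kind of minor detail the paper also leaves implicit.
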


 \section{Heuristic algorithms}
 \label{sec.heuristic}

 In this section, we introduce two heuristic algorithms,
 one of which is a greedy algorithm and the other of which is based on
  an algorithm for the CDS problem.
  These algorithms are regarded as baselines in the simulation reported in
  the subsequent section.

 \subsection{Greedy algorithms}
 In the greedy algorithms, we repeat choosing a node for which the
 expected number  of nodes dominated by it is largest.
 We first explain the algorithm for the full feedback model.

 Let $U$ be the set of chosen nodes and $\phi$ be the realization that
 represents  the observed states of nodes up to the beginning of an
 iteration.
 For each $v \in \supp(\phi)\setminus U$ such that $\phi(v)=1$,
 we define its score $\chi(v)$ as 
 $\sum_{u \in N[v] \setminus \supp(\phi)} \Pr[\psi(u)=1
  \mid \psi  \succeq \phi]$.
  Hence, $\chi(v)$ represents the expected number of active nodes that
  are not dominated by $U$ but by $v$.
  In this iteration, the policy chooses the node $v$ that maximizes
  $\chi(v)/w(v)$ among all nodes in $\supp(\phi)\setminus U$.

  Recall that $U= \supp(\phi)$ in the local feedback model.
  For the local feedback model,
  we choose a node that is dominated by an active node in $U$.
  We define the score of such a node $v$
  as 
 $\Pr[\psi(v)=1 \mid \psi \succeq \phi] \times \sum_{u \in N[v] \setminus N[U]} \Pr[\psi(u)=1
  \mid \psi  \succeq \phi, \psi(v)=1]$. Namely, it is
  the expected number of active
  nodes that are not dominated by $U$ but by $v$,
  multiplied by the probability that $v$ is active.

  The full algorithm is shown in Algorithm~\ref{alg.greedy}.
  
\begin{algorithm}[t]
  \caption{Greedy algorithm}
 \label{alg.greedy}
 \SetKwInOut{Input}{Input}\SetKwInOut{Output}{Output}
 \Input{undirected graph $G=(V,E)$, node weights $w\colon V\rightarrow
 \Rset_+$, full realizations $\Psi$, probability distribution $p\colon
 \Psi\rightarrow [0,1]$, and root node $r \in V$ ($\psi(r)=1$ for all
 $\psi\in \Psi$)}
 \Output{$U \subseteq V$}
 \BlankLine
 $U \longleftarrow \{r\}$ and $\phi(r) \longleftarrow 1$\;
 \ForEach{$v \in V$}{
 \lIf{the model is the full feedback model, $v \in N[r]$, and $v$ is active}{$\phi(v)\longleftarrow 1$}
 \lElseIf{the model is the full feedback model, $v \in N[r]$ ($v$ is
 inactive)}{$\phi(v)\longleftarrow 0$}
  \lElse{$\phi(v)\longleftarrow *$}
 }
 \While{$U$ does not dominate $V$}{
 $D\longleftarrow \{v \in \supp(\phi)\setminus U \colon \phi(v)=1\}$
 in the full feedback model\;
 $D \longleftarrow \{v \in V\setminus
 U\colon \exists u \in U, \phi(u)=1, uv \in E\}$
 in the local feedback model\;
 \lForEach{$v \in D$}{compute $\chi(v)$}
 $v \longleftarrow \argmax_{v \in D} \chi(v)/w(v)$ and $U \longleftarrow
 U \cup \{v\}$\;
 update $\phi(v)$ in the full feedback model, and $\phi(u)$ for
 $\forall u \in N[v]\setminus U$ in the local feedback model\;
 remove all nodes $u$ with $\psi(u)=0$ for all $\psi \in \Psi_{\phi}$
 from $G$\;
 remove the connected components not including $r$ from $G$
 }
 output $U$
\end{algorithm}

\paragraph*{Bad instance}
The performance guarantee of Algorithm~\ref{alg.greedy}
can be arbitrarily bad even for the (deterministic) CDS
problem.
Let $G=(V,E)$ be the graph defined as follows (see also Figure~\ref{fig.badex}(a)).
The node set $V$ consists of nodes 
$r,v, u, u'$, and the other $n'$ nodes.
We let $U = V\setminus \{r,v,u,u'\}$.
The edge set $E$ contains 
edges $rv, ru, uu'$.
In addition, each of $v$ and $u'$ is joined with all nodes 
in $U$.
We can see that $G$ is a unit disk graph by defining the positions of
the nodes appropriately.
The node weights $w$ are defined by $w(r)=0$, $w(v)=n'$,
$w(u)=1+\epsilon$ for a small $\epsilon > 0$, $w(u')=0$,
and $w(x)=0$ for each node $x \in U$.
The minimum-weight CDS for this instance is
$\{r,u,u'\}$, and its weight is $1+\epsilon$.
However,
Algorithm~\ref{alg.greedy} outputs another CDS that consists of $r,v$, and a
node of $U$.
The weight of this CDS is $n'$.
The factor of the CDS output by 
Algorithm~\ref{alg.greedy} 
to the optimal one
is $n'/(1+\epsilon)$, which can be arbitrarily large by setting  $\epsilon$ 
to a value near to $0$ and by increasing $n'$.

 \begin{figure}
  \centering
 \subfigure[greedy algorithm]{\includegraphics[]{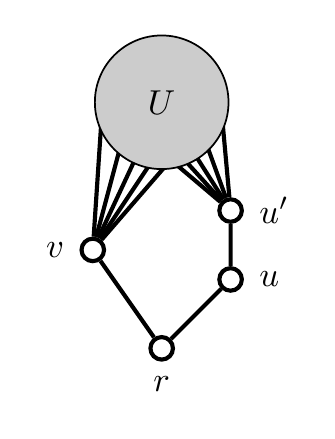}}
 \subfigure[CDS-based algorithm]{\includegraphics[]{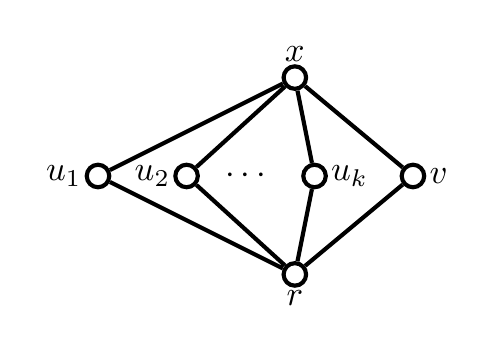}}
\caption{Bad instances for the greedy and the CDS-based algorithms}
\label{fig.badex}
 \end{figure}

  \subsection{CDS-based algorithms}
  In the other algorithm, we use a $\beta$-approximation algorithm
  for the CDS problem.
  In this algorithm, we first compute a CDS $D$ for the graph $G$
  and the node weights $w$
  by the $\beta$-approximation algorithm.
  Then, we repeat choosing nodes in $D$ such that the chosen nodes
  always induce connected subgraphs.
  During this process, if a node in $D$ turns out to be inactive,
  then we stop choosing nodes in $D$. We remove the nodes that are revealed
  to be inactive from $G$ and recompute a CDS for the updated graph,
  where the weights of the nodes that have been chosen up to that moment
  are defined to be $0$.
  These operations are repeated until the chosen nodes form a solution.
  The pseudo-code of this algorithm is given in 
  Algorithm~\ref{alg.cds-based}.
  
 \begin{algorithm}[t]
  \caption{CDS-based algorithm}
 \label{alg.cds-based}
 \SetKwInOut{Input}{Input}\SetKwInOut{Output}{Output}
 \Input{undirected graph $G=(V,E)$, node weights $w\colon V\rightarrow
 \Rset_+$, full realizations $\Psi$, probability distribution $p\colon
 \Psi\rightarrow [0,1]$, and root node $r \in V$ ($\psi(r)=1$ for all
 $\psi\in \Psi$)}
 \Output{$U \subseteq V$}
 \BlankLine
 $U \longleftarrow \{r\}$ and $\phi(r) \longleftarrow 1$\;
 \ForEach{$v \in V$}{
 \lIf{the model is the full feedback model, $v \in N[r]$, and $v$ is active}{$\phi(v)\longleftarrow 1$}
 \lElseIf{the model is the full feedback model, $v \in N[r]$ ($v$ is
 inactive)}{$\phi(v)\longleftarrow 0$}
  \lElse{$\phi(v)\longleftarrow *$}
 }
  \While{$U$ does not dominate $V$}{
  update $w(u) \longleftarrow 0$ for $\forall u \in U$\;
  compute a minimum-weight CDS $D$ of $(G,w)$\;
  \While{$(D \cap N[U])\setminus U \neq \emptyset$}{
  choose $v \in (D \cap N[U])\setminus U$ and $U \longleftarrow U\cup
  \{v\}$\;
  update $\phi(v)$ in the full feedback model, and $\phi(u)$ for
  $\forall u \in N[v]\setminus U$ in the local feedback model\;
  \lIf{$\phi(u)=0$ for some node $u$}{break}
  }
  remove all nodes $u$ with $\psi(u)=0$ for all $\psi \in \Psi_{\phi}$
  from $G$\;
  remove the connected components not including $r$ from $G$
 }
 output $U$
\end{algorithm}

\paragraph*{Bad instance}
Algorithm~\ref{alg.cds-based}
does not perform well for some instances
because its behavior does not 
depend on the probability distribution on the node states.
We see this fact by presenting a bad
instance
for Algorithm~\ref{alg.cds-based}.
For ease of discussion, we suppose that a minimum-weight CDS
can be computed in Step~8 of Algorithm~\ref{alg.cds-based},
and consider the local feedback model.

Let $\epsilon$ and $\delta$ be small positive constants.
Let $G=(V,E)$ be the graph illustrated in Figure~\ref{fig.badex}(b).
Namely, the node set $V$ of a graph $G$
comprises nodes $r,v,u_1,\ldots,u_k,x$,
and the edge set $E$ consists of 
edges $rv, ru_1,\ldots,ru_k, xv, xu_1,\ldots,xu_k$.
The node weights of $u_1,\ldots,u_k$ are $1$, and 
that of $v$ is $1+\epsilon$,
and those of $r$ and $x$ are 0.
We assume that nodes $r$, $x$, and $v$ are always active.
For a full realization $\psi \in \Psi$,
$p(\psi)=\delta$ if 
exactly one of $u_1,\ldots,u_k$ 
is active,
and 
$p(\psi)=1-k \delta$ if 
all of $u_1,\ldots,u_k$ 
are inactive.

If $\delta$ is small enough, the average objective value achieved 
by a policy is almost equal to the one for the case where all of $u_1,\ldots,u_k$ are inactive.
In this case, 
the best policy chooses only $v$, which results in the objective value $1+\epsilon$.
However, Algorithm~\ref{alg.cds-based} chooses all of $u_1,\ldots,u_k$ and then chooses $v$.
Thus the approximation factor of Algorithm~\ref{alg.cds-based} is nearly $k/(1+\epsilon)$.
This factor can be arbitrarily large by increasing $k$
and setting $\epsilon$ to a constant.

Notice that the above discussion holds even when there are edges induced by $\{v,u_1,\ldots,u_k\}$. Thus we can find such a bad instance even if the graph is restricted to unit disk graphs. Moreover, we obtain a bad example in the full feedback model by modifying the above instance slightly as follows. 
For each $i\in [k]$,
subdivide the edge $ru_i$ by inserting a new node $u'_i$ 
(i.e., edge $ru_i$ is replaced by $ru'_i$ and $u'_i u_i$),
where we let $u'_i$ be always active. 
The weights of $u'_1,\ldots,u'_k$ are defined as $1$, and those
of $u_1,\ldots,u_k$ are changed to 0.
Then, similar to the above instance for the local feedback model,
 Algorithm~\ref{alg.cds-based} chooses all of $u'_1,\ldots,u'_k$ and $v$,
achieving the objective $k+1+\epsilon$
while the optimal policy chooses $v$ and $x$, achieving the objective $1+\epsilon$.

 \section{Simulation results}
\label{sec.simulation}

In this section, we report our simulation results to evaluate the empirical
performance of the proposed algorithm.

\subsection{Setting}
Throughout this section, we call the algorithm given in Section~\ref{sec.fullfeedback}
the polymatroid-based algorithm,
and the heuristic algorithms given in Section~\ref{sec.heuristic}
the greedy and the CDS-based algorithms.
In the implementations of the polymatroid-based and the CDS-based algorithms, we use the CBC integer programming solver (\url{https://projects.coin-or.org/Cbc})
to solve the polymatroid Steiner tree and the CDS problems. The solver
solves these problems to optimality in all of the simulations that we
report here. Thus, the polymatroid-based algorithm achieves an approximation factor within $1+\lg(1/\delta)$ here.

In addition to these algorithms, we implemented 
the local algorithm given by Borgs\ et\ al.~\cite{BorgsBCKL12}.
The reason to choose this algorithm is that this algorithm requires the most limited information among the known local algorithms;
The other local algorithms \cite{GuhaK98,KhullerY16} require the information on the 3-hop neighbors of the previously chosen 
nodes while the algorithm of Borgs\ et\ al.\ requires only the information on the 2-hop neighbors.
Note that even in the full feedback models, our algorithms cannot observe the 2-hop neighbors, and thus 
the algorithm of Borgs\ et\ al.\ uses more information on neighbors than our algorithms.
To be fair, our algorithms use the probability distributions on realizations instead.

The algorithm of Borgs\ et\ al.\ is defined only for the cardinality minimization. To apply it to the weight minimization,
we slightly modified the algorithm. 
Let us explain this modification briefly.
The algorithm of Borgs\ et\ al.\ maintains a solution set $U$ in the course of its execution,
and proceeds in iterations.
In each iteration, it first chooses a node $v$ from $N[U]$ that maximizes $|N[v]\setminus N[U]|$, and add $v$ to $U$.
Then, it also chooses a node from $N[v]\setminus N[U]$ uniform randomly, and adds it to $U$ if it dominates a node not dominated by $U$.
In our implementation, 
we define $v$ as a node in $N[U]$ that maximizes $|N[v]\setminus N[U]|/w(v)$ instead of $|N[v]\setminus N[U]|$.
The other part of the algorithm is the same as the original.

For the simulations, we prepared the following three types of instances:
\paragraph*{Unit disk graph}
 In this type of instances,
 the graph is a unit
 disk graph,
 where each node has a random position
 distributed uniformly on a 2-dimensional square region of a unit side length,
 and two nodes are joined by an edge if their distance is at most
 $1/\sqrt{n}$. 
 We intend to make the density of the graph uniform when $n$ changes
 by this definition of edges.
 The weight of a node is sampled uniformly
 at random from $[0,1]$. 
 As for the probability distribution on the full realizations,
 we construct $M$ full realizations as follows:
 for each full realization,
  we randomly pick seven points on the square region
  with thresholds sampled from $[0,1/3]$,
  and a node except the root and its neighbors
  is defined to be inactive if and only if
  its distance from one of the seven points
  is at most its threshold. The root and its neighbors are always active.
  The distribution on the full realizations is defined
  so that one of these $M$ full realizations
  appears uniformly at random.
  Therefore, $\delta = 1/M$ in these instances.

 \paragraph*{Bidirectional disk graph}
 In graphs of these instances,  each node has a random position on the
 a 2-dimensional square region of a unit side length,
 and has a radius sampled uniform randomly from $[0,1/3]$.
 We consider these graphs as a model of wireless networks;
 the radius of a node represents a coverage length of a wireless device,
 and an edge indicates that the devices corresponding to its end nodes
 can send signals each other.
 Two nodes are joined by an edge whenever their distance is at
 most the minimum of their radiuses.
 The other settings are defined in the same way as the unit disk graph instances.

 \paragraph*{Erd\H{o}s-R\'enyi random graph}
 In this type of instances, the graph is an Erd\H{o}s-R\'enyi random graph, which includes an edge joining each pair of nodes with probability $0.1$ independently.
 The node weights are defined in the same way as the unit disk graph instances.
 To define the full realizations, we define the location of each node on a 2-dimensional square region of a unit side length randomly, 
and construct $M$ full realizations and a distribution on them 
 as in the unit disk graph instances. 
 The thresholds associated with the seven points picked for defining the full realizations are sampled from $[0,0.25]$ since the the sampling from $[0,1/3]$ makes the graphs on the active nodes too small.

\subsection{Results}	      

\paragraph{Comparison of the adaptive algorithms}
As we noted in Section~\ref{sec.heuristic},
the approximation factors of the heuristic algorithms 
are unbounded for the carefully constructed instances.
However, these algorithms possibly perform well for other instances. 
A purpose of this simulation is to 
compare their performances and that of the polymatroid-based algorithm.

In Figure~\ref{fig.simulation}, we compare the average objective values
achieved by the algorithms. 
Each point in the figure corresponds to an
instance of the problem, and its $x$-coordinate (resp., $y$-coordinate)
is the ratio of the average objective value achieved by the
polymatroid-based algorithm to that achieved by the greedy algorithm
(resp., the CDS-based algorithm).
Thus, if the $x$-coordinate (resp., $y$-coordinate) of the point
is smaller than 1, then the polymatroid-based algorithm performs better than
the CDS-based algorithm (resp., the greedy algorithm) in the
corresponding instance.
If the point is below the line representing $x=y$,
then this indicates that the CDS-based algorithm is better than
the greedy algorithm.
In the construction of the instances, 
parameters are set to $M=30$ and $n=40,60,80,100$.
For each instance type and each parameter setting,
we constructed 5 instances.

We can observe that all of the three algorithms achieve the best
in some instances, and thus we can find no algorithm that is superior or is inferior to the other algorithms completely.
In the full feedback model, the polymatroid-based and the CDS-based  algorithms are superior to the greedy algorithm in many of the instances.
Comparing the polymatroid-based and the CDS-based  algorithms,
the polymatroid-based algorithm is slightly better than the CDS-based algorithm, but their difference is small.
In the local feedback model, the difference is clear; the polymatroid-based algorithm
performs better than the other two algorithms in unit disk and the Erdos-Renyi graphs. 
For bidirected disk graphs with the local feedback model,
the CDS-based algorithm is superior to the others.

Although the merit of the greedy algorithm may not be clear when the objective values are compared, it is superior to the other algorithms in running time.
For example, the greedy algorithm
is 10000 times faster than the other two algorithms for the unit disk graph instances with $n=100$ in our simulations.
Our implementations of the polymatroid-based and the CDS-based algorithms have to solve the integer programs, and the time for this part is dominant in their computations.
The running times of these
algorithms can be improved by more sophisticated integer programming solvers
or by implementing other algorithms for the polymatroid Steiner tree and the CDS problems
at the sacrifices of their optimization performances.
However, 
it is unlikely that they are superior to
the greedy algorithm because of their definitions.
Thus the greedy algorithm is a good option when 
a fast algorithm is required (e.g., when the network is frequently updated).
On the other hand, it is often the case that receiving feedback
from the environment is time-consuming. In such a situation,
the running time of an algorithm is not an important factor, and the
polymatroid-based and the CDS-based algorithms are also useful.

\paragraph{Comparison of the adaptive algorithms and the local algorithm}

We also apply the implementation of the local algorithm of 
Borgs\ et\ al.~\cite{BorgsBCKL12} to the instances used in the simulation reported above.
In all of the instances, the local algorithm was clearly inferior to our three algorithms.
In Table~\ref{table.local}, we report 
the maximum and minimum ratios of the objective values achieved by the local algorithm
to that achieved by the polymatroid-based algorithm for each graph class.

\paragraph{Change of performances for varying $M$}
The approximation factor of the polymatroid-based algorithm
depends on $1/\delta$ ($= M$) in our theoretical guarantee;
as $1/\delta$ increases, the approximation factor gets worse.
To evaluate the influence of $1/\delta$ on the performance of the algorithms,
we did the simulations with varying $M$.

Figure~\ref{fig.simulation2} indicates how performances 
of the algorithms change when $M$ changes.
We used the unit disk graph instances with $n=70$.
For each $M$, we prepared 5 instances,
and report the averages of the objective values 
achieved by the algorithms for those 5 instances.
The right figure in Figure~\ref{fig.simulation2} 
also gives the average weights of active nodes
in the solutions output by the algorithms.
Recall that a solution includes both active and inactive nodes in the local feedback model
while a solution consists of only active nodes in the full feedback model.
Although the objective in the local feedback models is not to minimize the weights of the active nodes in a solution, it is meaningful to see how large they are since the active nodes in the solution forms a CDS.

The figure shows that the polymatroid-based algorithm is superior to the other algorithms
for all values of $M$.
We cannot observe any influence from increasing $M$ on the
performance in the full feedback model.
In the local feedback model, the average weights are increased 
as $M$ increases, but this is a common phenomenon in all algorithms.
Hence, 
at least in this setting,
increasing $M$ does not give a large influence on the
performance of the polymatroid-based algorithm.

From the right figure of Figure~\ref{fig.simulation2},
we can see that the 
weights of the solutions output by
the CDS-based algorithm is larger than those of the greedy algorithm
in the local feedback model.
 However, the weights of the active nodes in the solutions
output by the CDS-based algorithm is smaller than those of the greedy algorithm.
This is an interesting feature of the CDS-based algorithm.

 \begin{figure}[t]
  \centering
\includegraphics[scale=.5]{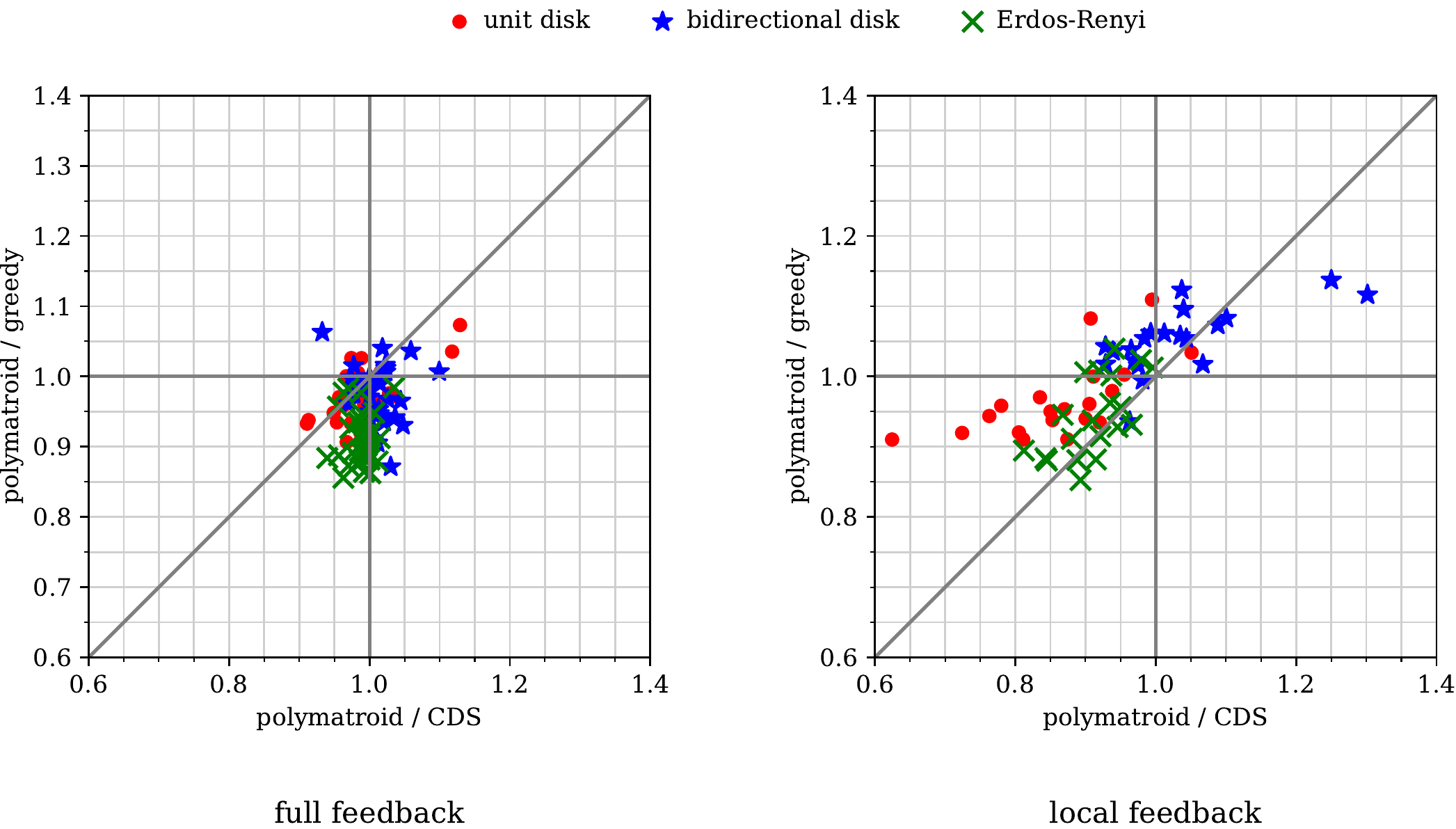}
 \caption{Simulation results to compare the three proposed algorithms}
  \label{fig.simulation}
 \end{figure}
\begin{table}[t]
\caption{Ratios of objective values achieved by Borgs\ et\ al.~\cite{BorgsBCKL12} to the polymatroid-based algorithm}
\label{table.local}
\centering
 \begin{tabular}{|c|ccc|}
  \hline
  & unit disk & bidirectional disk & Erdos-Renyi\\
  \hline 
  maximum & 1.588 & 1.494 & 2.357\\
  minimum & 1.359 & 1.300 & 1.483\\  \hline
 \end{tabular}
\end{table}

 \begin{figure}[t]
  \centering
  \includegraphics[scale=.4]{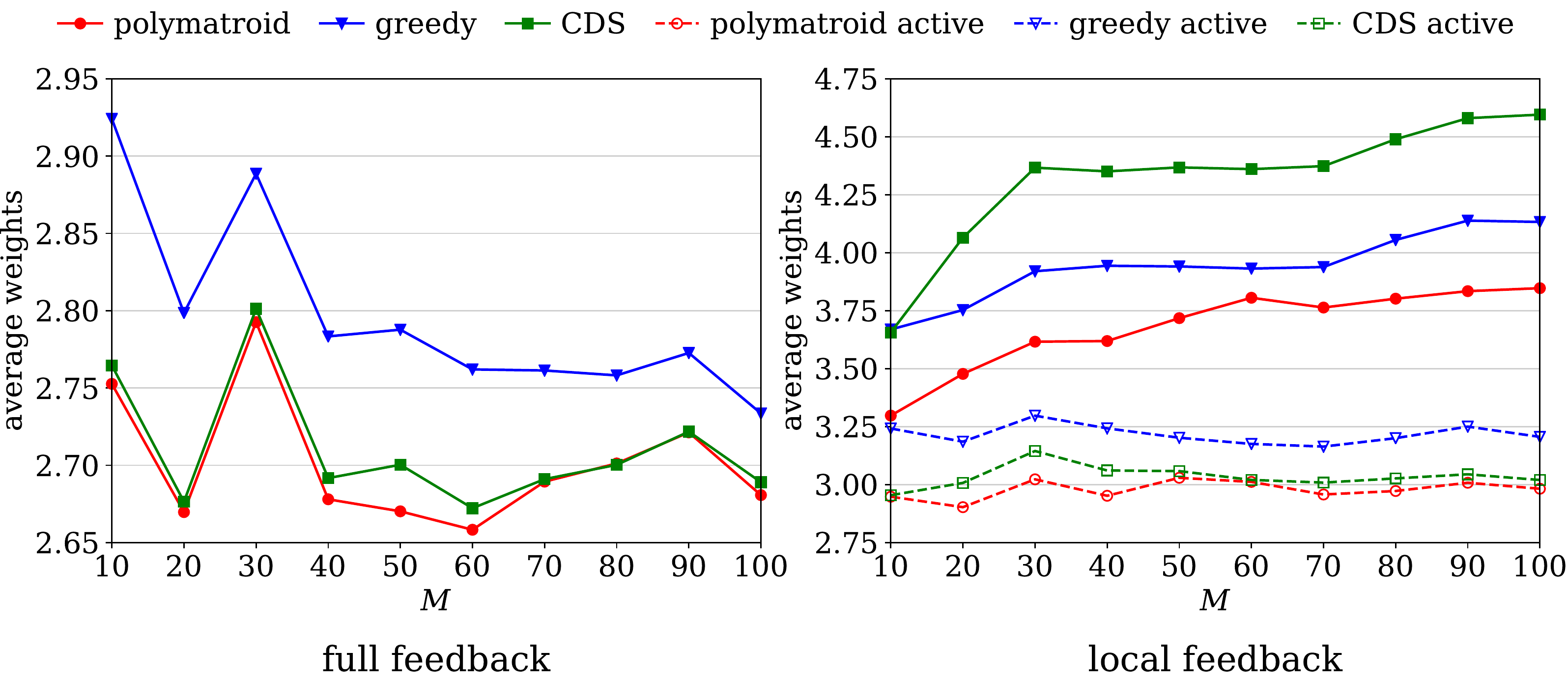}
  \caption{Average objective value as a function of $M$ in the
  unit disk graph instances}
  \label{fig.simulation2}
 \end{figure}

\section{Conclusion}
\label{sec.conclusion}

In this paper, we formulate the robust CDS problem, which is a new
stochastic optimization variant of the CDS problem.
We also present three algorithms. One of them has a theoretical
performance guarantee of factor $O(\alpha \log 1/\delta)$.
Through computational simulation, we compare their empirical
performances.
Considering
the instability of wireless ad hoc networks, 
we believe that these algorithms are useful for the construction of the virtual
backbone of a network.

\end{document}